\newcolumntype{C}{>{\centering\arraybackslash}m{6em}}
\newtheorem{theorem}{Theorem}[section]
\newtheorem{definition}[theorem]{Definition}
\newtheorem{proposition}[theorem]{Proposition}
\newtheorem{obs}[theorem]{Observation}
\theoremstyle{definition}
\title{A Quantum Algorithm For Computing Contextuality Bounds}
\author[$\dagger$]{Colm Kelleher}
\author[$\dagger,\ddagger$]{Frédéric Holweck}
\affil[$\dagger$]{Laboratoire Interdisciplinaire Carnot de Bourgogne, ICB/UTBM, UMR 6303 CNRS, Universit\'e de Technologie de Belfort-Montbéliard, 90010 Belfort Cedex, France}
\affil[$\ddagger $]{Department of Mathematics and Statistics, Auburn University, Auburn, AL, USA}
\date{}
\begin{document}
\maketitle
%\tableofcontents

\begin{abstract}
Quantum contextuality is a limitation on deterministic hidden variable models, testable in measurement scenarios where outcomes differ under quantum or classical descriptions due to a common set of constraints. When considering measurements of $N$-qubit spin operators, constraints arise from commutation relations and classical bounds are determined by the \textit{degree} of contextuality, an NP-hard quantity to compute in general, related to the larger class of optimisation problems known as \texttt{MaxLin2}. In this work we give a quantum algorithm based on Grover’s search algorithm, computing the degree of contextuality in $O(\sqrt{n} \log\log{n})$ in $n$ states, a speedup over classical brute force method. We also study variations of Grover which encode the relevant information in the phases of the basis states, reducing circuit width and depth requirements with indicative complexity of $O(n^{\frac{1}{3}}(\log{n})^{2}\log\log{n})$. Testing on contemporary quantum backends with the IBM quantum experience gives inconclusive outputs due to noise-induced errors, an issue hopefully fixed by future hardware.
\end{abstract}

\section{Introduction}
A set of linear constraints on a given input set over $\mathbbm{F}_{2}$ does not often offer a full solution. In fact just finding the maximal number of satisfiable constraints, let alone an input that delivers them, is known as \texttt{Max Lin 2}\cite{maxlin2} and is NP-Complete. In this article we focus on a quintessential family of such problems and offer a quantum algorithm that computes the input of maximal satisfaction, in sub-polynomial query time.

The problem we are interested in is computing the \textit{degree of contextuality}, a quantity associated to quantum systems that will be explained in \S \ref{sec:contextuality}, with it's relation to \texttt{MaxLin2} in \S \ref{sec:max_lin_2} and examples used later in the text introduced in \S \ref{sec:geometries}. We describe the algorithm itself in Section \ref{sec:algorithm}. We then give results of running this algorithm on the IBM Quantum Experience\cite{ibmq} for both simulations and real quantum backends in Section \ref{sec:results_chpt_6}, and then explore variations in Section \ref{sec:quasi_grover}.

\subsection{Quantum Contextuality}\label{sec:contextuality}
Quantum contextuality is a no-go limitation on classical hidden variable models, first described by Kochen and Specker\cite{Kochen_specker} and Bell\cite{bell_problem_1966}. In its original form it proved the impossibility of assigning a measurement-independent configuration - described by deterministic hidden variables - to any quantum state living in a Hilbert space of dimension $n \geq 3$. In later years it was reformulated around measurement operators and state-independence, most famously by Peres\cite{Peres90} and Mermin\cite{Mermin93} who gave a geometric proof of the non-existence of non-contextual hidden variables (NCHV) in dimension $n \geq 4$. NCHV models are those using deterministic hidden variables where the pre-determined measurement outcome is not dependent on the choice of context the measured observable is considered part of.

Recent work has emphasised the geometric nature of contextuality (see for example \cite{cabello_proposed_2010, Budroni21, Hexagon_3_qubits, muller_multi_qubit_2022, henri_contextuality_2022}) and its link with nonlocality \cite{cabello_converting_2021}, quantum computing \cite{howard_contextuality_2014}, random access codes \cite{gatti_random_2023}, and recently quantum error-correcting codes \cite{khu2025contextuality}. Contextual inequalities can be formed similar to nonlocality ones, and tested on photonic devices \cite{xu_experimental_2022} and noisy intermediate-scale quantum (NISQ) computers \cite{Kelleher_2_qubit_games, kelleher_exploiting_2024}. Such demonstrations illustrate empirically the lack of NCHV models, but they require known classical bounds to violate. Such bounds rely on the degree of contextuality of the system in question, defined in Def. \ref{def:degree}, which is NP-hard to compute in the general case. In this section we construct the constraints needed to show contextuality, and then give examples of specific geometric representations that will be used to test the algorithm described in the following sections.

Consider the set of Pauli 1-qubit spin operators including identity,
\begin{equation}
    \mathcal{P} := \{X, Y, Z, I\}
\end{equation}
\begin{equation}
    X = \begin{pmatrix}
        0 & 1 \\
        1 & 0
    \end{pmatrix} \quad Y = \begin{pmatrix}
        0 & -i \\
        i & 0
    \end{pmatrix} \quad Z = \begin{pmatrix}
        1 & 0 \\
        0 & -1
    \end{pmatrix} \quad 
\end{equation}
and the set of all tensor products of $N$ Pauli operators, excluding the trivial one $\mathbbm{1} := I^{\otimes N}$:
\begin{equation}
    \mathcal{O} := \{ \sigma_{1}\otimes \dots \otimes \sigma_{N},\quad \sigma_{i} \in \mathcal{P} \} \backslash I^{\otimes N}
\end{equation}
This set, and certain subsets thereof, contain triples of operators that mutually commute and whose product is the identity up to a sign. We refer to each such triple as a \textit{context} (not to be confused with the broader term in the literature, which generally supposes a maximal set of mutually-commuting operators with product identity). We can label each context by its constituent operators
\begin{equation}\label{eq:context_definition}
    \begin{split}
    \mathcal{C}_{i} &= \{\mathcal{O}_{i_{1}}, \mathcal{O}_{i_{2}}, \mathcal{O}_{i_{3}}\} \\
     \quad \mathcal{O}_{i_{1}}\mathcal{O}_{i_{2}}\mathcal{O}_{i_{3}} = \pm \mathbbm{1}, \quad \mathcal{O}_{i_{j}} &\in \mathcal{O}, \quad [\mathcal{O}_{i_{j}}, \mathcal{O}_{i_{k}}] = 0 \;\;\forall j, k \in \{1,2,3\}
\end{split}
\end{equation}
The commutation of the operators ensures that the defining functional relationship in \eqref{eq:context_definition} is also obeyed by their eigenvalues:
\begin{equation}\label{eq:eigenvalue_relation}
    e(\mathcal{O}_{i_{1}})e(\mathcal{O}_{i_{2}})e(\mathcal{O}_{i_{3}}) = \pm 1
\end{equation}
where $e(\mathcal{O}_{i_{j}}) = \pm 1$ is also interpreted as the outcome of measuring a given state in the $\mathcal{O}_{i_{j}}$ spin direction. Equation \eqref{eq:eigenvalue_relation} then gives a constraint on measurement outcomes for compatible measurements, whose product is known by the context product. We will intuitively refer to contexts whose operators multiply to $+\mathbbm{1}$ as \textit{positive} and those to $-\mathbbm{1}$ as \textit{negative}.

\subsection{Product Constraints \& \texttt{MaxLin2}}\label{sec:max_lin_2}
As operators can appear in multiple contexts, their measurement outcomes are subject to multiple constraints at once. The notion of contextuality is seen by the fact that not all constraints can be simultaneously satisfied by classical values. This is most readily seen in the celebrated Peres-Mermin Magic Square\cite{Peres90}, a collection of 9 operators sharing 6 contexts, arranged in a grid formation - see for example Fig. \ref{fig:grid_example}.

\begin{figure}[h!]
    \centering
    \includegraphics[width=0.3\textwidth]{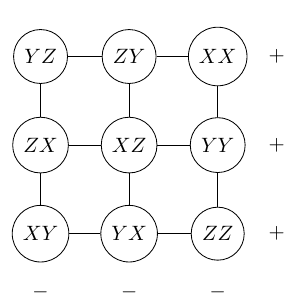}
    \caption{An example of a Peres-Mermin Magic Square for $N=2$. 2-qubit operators written with tensor product suppressed. Contexts are given by lines of 3 operators, positive ones labeled by ``$+$" (horizontal) and negative by ``$-$" (vertical). Any classical HV assignment of $\pm1$ to the operators will inevitably lead to at least one context constraint unsatisfied.}
    \label{fig:grid_example}
\end{figure}

The product constraint along a context translates to classical constraints on the measurement outcomes. Writing these constraints for the square:

\begin{equation}\label{eq:mermin_functional_relationships}
\begin{split}
	e(YZ)\cdot e(ZY)\cdot e(XX) &= e(II) = 1 \\
	e(ZX)\cdot e(XZ)\cdot e(YY) &= e(II) = 1 \\
	e(XY)\cdot e(YX)\cdot e(ZZ) &= e(II) = 1 \\
	e(YZ)\cdot e(ZX)\cdot e(XY) &= e(-II) = -1 \\
	e(ZY)\cdot e(XZ)\cdot e(YX) &= e(-II) = -1 \\
	e(XX)\cdot e(YY)\cdot e(ZZ) &= e(-II) = -1
\end{split}
\end{equation}

There is no consistent assignment to the outcomes $e(\mathcal{O}_{i_{j}})\in \{ \pm 1\}$ that can satisfy more than 5 of the 6 above constraints. Such a pre-measurement classical assignment will be referred to as a \textit{hidden variable} (HV) assignment, as it can be considered as arising deterministically from some variables unseen by quantum mechanics.

One can readily translate each contextual constraint to a set of constraints over $\mathbbm{F}_{2}$, where for the context $\mathcal{C}_{i}$:

\begin{equation}
\begin{split}
e(\mathcal{O}_{i_{j}}) &= (-1)^{a_{i_{j}}} \\
\Pi_{j}e(\mathcal{O}_{i_{j}}) = (-1)^{b_{i}} &\mapsto \sum_{j}a_{i_{j}} = b_{i} \text{ mod }2
\end{split}
\end{equation}

A collection of quantum operators may satisfy all contextual product relations given by \eqref{eq:context_definition}, yet no such HV assignment exists that satisfies all associated classical relations,
\begin{equation}
\Pi_{j}e(\mathcal{O}_{i_{j}}) = (-1)^{b_{i}}, \quad \forall \; i
\end{equation}

In such a scenario we say that the set of operators and context constraints, or geometry representing them as defined below, is \textit{contextual}.

\subsection{Geometric Description}\label{sec:geometries}
The Peres-Mermin Magic Square is a geometric representation of the \texttt{MaxLin2} problem set \eqref{eq:mermin_functional_relationships}, with all constraints satisfied using quantum operators and at most 5 out of 6 satisfied for any classical assignment. In general any collection of $N$-qubit operators can be formed into a geometry, with vertices labeled by operators and lines formed from contexts. Each line will have an associated sign, informing its contextual constraint, all of which are satisfied with quantum operators. We can then give a formal definition of the degree of contextuality:
\begin{definition}\label{def:degree}
    The \textit{degree of contextuality} $d$ for an $N$-qubit geometry is the minimal number of line constraints left unsatisfied by any classical HV vertex assignment. 
\end{definition}

The degree of contextuality is a measure of how non-classical a set of operators are, as all context constraints are satisfied by quantum operators but not by classical HVs.

As the square Fig. \ref{fig:grid_example} contains 6 line constraints out of which only 5 are simultaneously satisfiable by classical HVs, it has degree 1. For another example, one can take the set of all nontrivial 2-qubit spin operators and form the set of all contexts associated with them, which is known as the doily. It contains 15 points (operators), 15 lines (contexts), out of which 3 are negative, and each point lies on 3 lines. See Fig. \ref{fig:doily}, left. Removing five non-intersecting lines gives the \textit{two-spread}, a 15-point, 10-line geometry with degree of contextuality $d=1$ (see Fig. \ref{fig:doily}, right).

As a final geometry of interest to introduce here, one can take the set of all commutation relations between 3-qubit operators of the form $\mathcal{O} = I\otimes \sigma_{i} \otimes \sigma_{j}$, with $\sigma_{i} \in \{X, Y, Z\}$ and allow permutations of qubits. This gives a 27-point, 45-line, $d=9$ geometry known as the \textit{eloily}, pictured in Fig. \ref{fig:eloily}. The eloily has been studied in relation to quantum games \cite{kelleher_exploiting_2024}, as it has the interesting property that the $d=9$ invalid lines furnishing an optimal solution are pairwise disjoint and intersect each point exactly once. The grid, doily and eloily are each \textit{generalised quadrangles} in that none contain triangles, and they are the only such quadrangles with 3-point lines. 

\begin{figure}[h!]
    \centering
    \includegraphics[width=.45\textwidth]{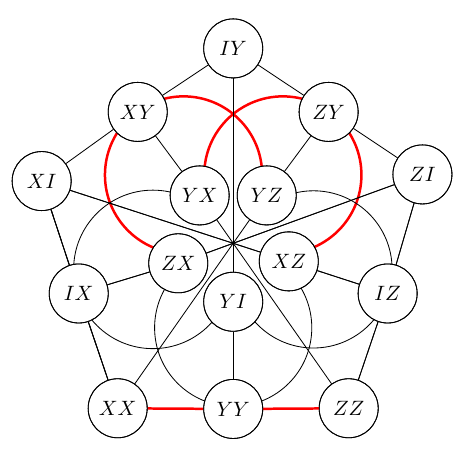}
    \includegraphics[width=.45\textwidth]{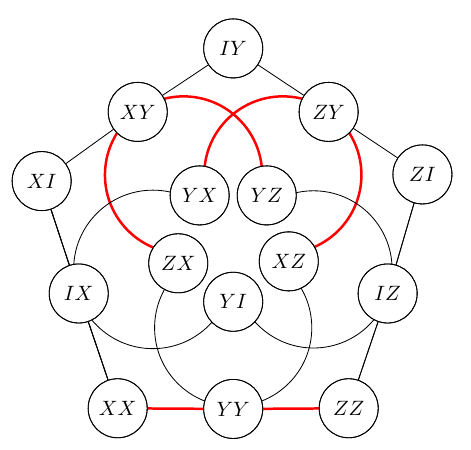}
    \caption{Left: The doily geometry of all 15 nontrivial 2-qubit operators. Negative lines are indicated in red. The doily has degree 3. Right: The two-spread geometry, with 5 disjoint lines from the doily removed - in this case the central lines from outermost edge vertex to outermost edge midpoint. The number of negative lines varies but the degree is always 1.}
    \label{fig:doily}
\end{figure}

\begin{figure}[h!]
    \centering
    \includegraphics[width=.45\textwidth]{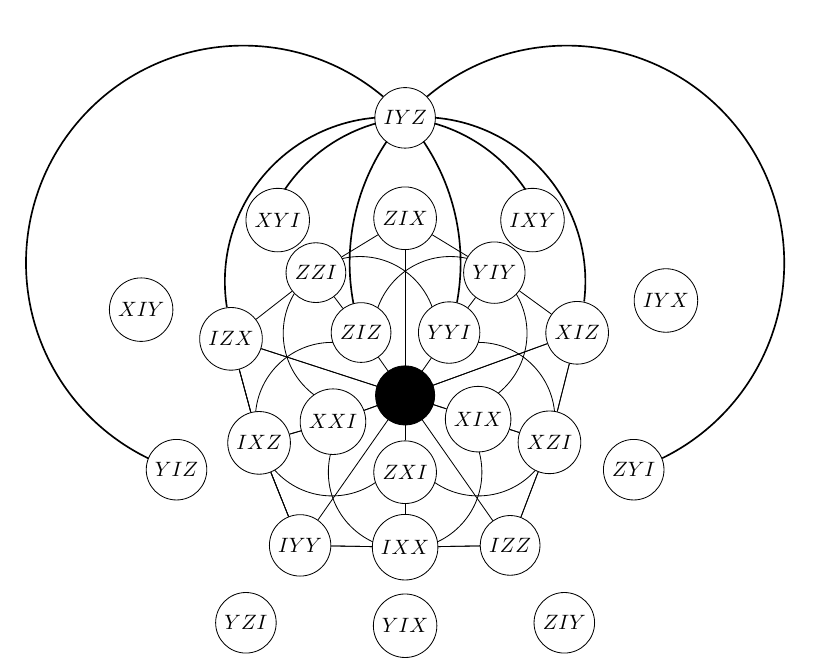}
    \includegraphics[width=.45\textwidth]{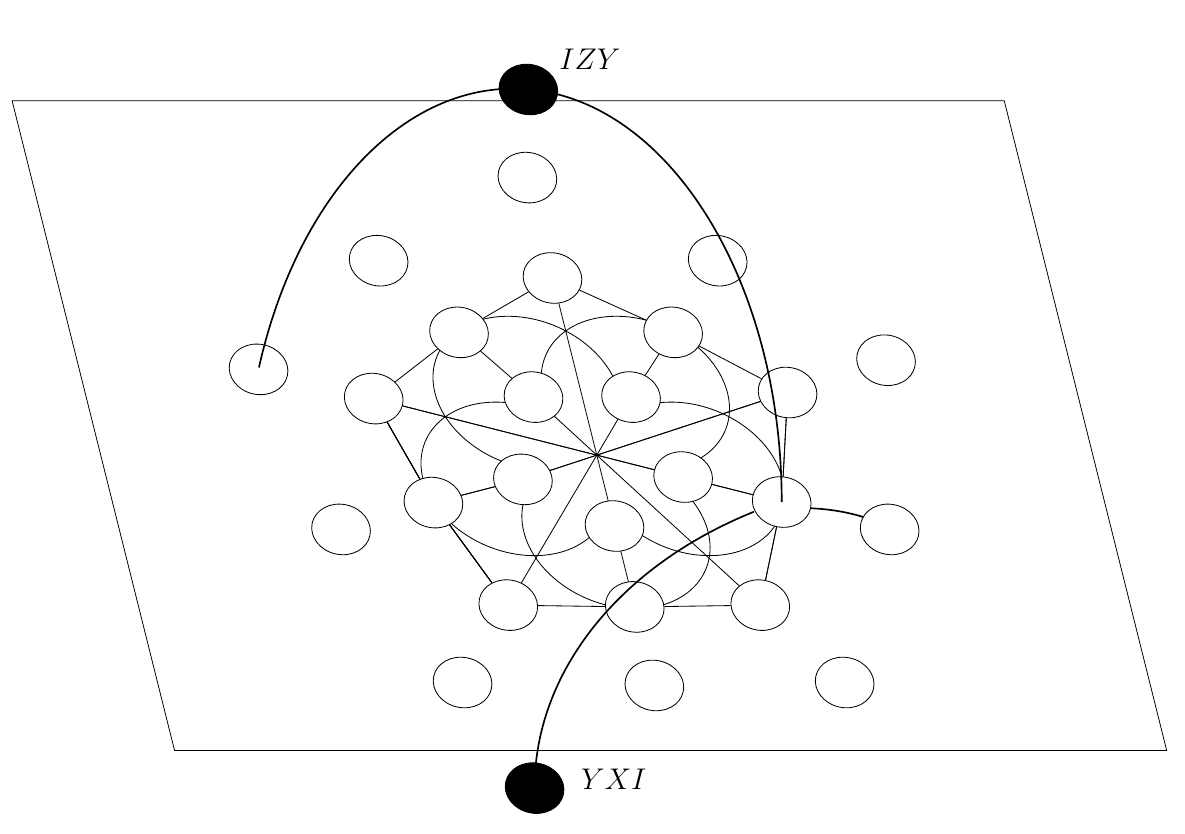}
    \caption{The eloily, with representative lines shown and labelled by 3-qubit operators - for full set of lines rotate by $\frac{2\pi}{5}$. Right shows selected lines between black points not co-planar with white. There are 9 negative lines, not indicated. It can be seen that a copy of the doily (labelled with 3-qubit operators, also containing 3-qubit grids) are embedded within the eloily.}
    \label{fig:eloily}
\end{figure}

In considering general $N$-qubit operators, the full set of commutation relations furnishes a geometry known as the \textit{symplectic polar space} of rank $2N-1$ over the field of characteristic 2, denoted $\mathcal{W}(2N-1,2)$. Of interest to us in this general case is the number of points, line and lines per point, where we consider (as in the exampled geometries above) lines to be 3 mutually-commuting operators with product $\pm \mathbbm{1}$. This general case has the structure:

\begin{equation}\label{eq:geometry_points_lines}
\mathcal{W}(2N-1,2) \text{ contains } \begin{cases}
	4^{N} - 1 \text{ points} \\
	(4^{N} - 1)(4^{N-1} - 1)/3 \text{ lines} \\
	4^{N-1} - 1 \text{ lines per point} \\
\end{cases}
\end{equation}

See \cite{muller_new_2024} for recent computations of $d$ for up to 7-qubit symplectic polar spaces and subgeometries.

What is of note is that the quantum contextuality of a geometry does not depend on the particular quantum labelling of its vertices, only by the point-line incidence structure of the geometry and distribution of negative lines. This is because the degree is a bound on classical HVs, which are subject these constraints.

\subsection{Applications of $d$}
When demonstrating contextuality in practice, experimental setups are designed such that any classical NCHV description presents an upper bound that can be violated by quantum mechanics. Such bounds rely on the degree, for example in the Rio Negro inequality\cite{cabello_proposed_2010}, tested in \cite{muller2025hexagons}, has for a geometry with $L$ lines a classical bound of 

\begin{equation}
\chi_{NCHV} \leq L - 2d
\end{equation}

Likewise there exist pseudotelepathy games that demonstrate contextuality, the most famous being the Mermin game \cite{cabello2001bell}, generalised via geometries described above in \cite{Kelleher_2_qubit_games, kelleher_exploiting_2024}. For a geometry with $V$ vertices and $l_{p}$ lines per point, a $P$-player game can be constructed with classical upper bounds:

\begin{equation}
\begin{split}
\omega_{ll} &\leq_{NCHV} 1 - \frac{{\ell_{p} - 1 \choose 1}}{{\ell_{p} \choose P}}\cdot \frac{d}{V} \\
\omega_{pl} &\leq_{NCHV} 1 - \frac{d}{L}\cdot\frac{1}{3} 
\end{split}
\end{equation}

Here the $ll$ and $pl$ bounds correspond to games where each player receives a line from the geometry, and one a line the other a point, respectively. It is clear from the above that the degree $d$ is a necessary quantity in a wide range if not all contextuality tests. All of the above examples have shown violations of these classical bounds on modern day NISQ computers, using geometries as shown in this work, in \cite{kelleher_empirical_2025}.

\section{The Algorithm}\label{sec:algorithm}
When computing the degree one could na{\"i}vely give all possible HV vertex assignments and find the minimal number of unsatisfied constraints. This would of course be unfeasible for large $N$-qubit geometries, which have up to $4^{N} -1$ vertices.

In 1999 D{\"u}rr and H{\o}yer\cite{durr_quantum_1999_grover} gave a fast quantum algorithm to identify the minimal element of an unsorted list, based on Grover's original quantum search algorithm\cite{grover_fast_1996}, in $O(\sqrt{n}\log{n})$, using the measured output at each iteration in a binary search. We use this approach combined with our data from our chosen geometry to compute $d$.

The idea is straightforward. We take as input a given geometry $\mathcal{G}$, with $V$ vertices and $L$ line constraints (``negative" lines known). Assigning to each vertex a qubit encoding its HV value, one can place the overall state into a superposition of all HV assignments over $\mathcal{G}$. From this we exploit Grover's algorithm, marking all states with invalid line counts $x$ below some chosen threshold $y$, then measuring such a state and updating our threshold $y \mapsto x$ before repeating. This follows \cite{durr_quantum_1999_grover}, the details we need to cover are the computation of invalid lines $x$, and determining which states to mark via the check $x \leq y$. Each iteration requires $O(\sqrt{n})$ oracle checks, and for binary search we need $O(\log{L}) = O(\log\log{n})$ iterations, from $L \sim V^{2}$ in $\mathcal{W}(2N-1,2)$, see equation \eqref{eq:geometry_points_lines}.

%We set up a register with $V$ qubits, encoding a superposition of all HV assignments on the vertices. From this register and the parameters of the geometry we can compute the number of invalid constraints, $x$, for any given vertex state and encode that in a separate register, entangled with the first. Then selecting some threshold $y$ we can implement an oracle marking all states satisfying $x \leq y$. Running Grover's algorithm with this oracle, one measures a marked input state with some $y^{(1)} \leq y$ invalid lines, which we use to update our threshold guess and run again. Eventually we find a stable output $y^{(n)} = \text{min}(x)$.

In Section \ref{sec:geometric_reg} we describe the segment of the circuit that computes $x$ for a given assignment, via operator $U_{\mathcal{G}}$. In Section \ref{sec:comparison_reg} we describe the segment that marks states with $x \leq y$, via $U_{\mathcal{C}}$. Our oracle is the composition of these, $U = U_{\mathcal{G}}^{\dagger}U_{\mathcal{C}}'U_{\mathcal{C}}U_{\mathcal{G}}$, with $U_{\mathcal{C}}'$ explained later.

\subsection{The Geometric Segment}\label{sec:geometric_reg}
First we explain the segment of the circuit $U_{\mathcal{G}}$ that describes the geometry $\mathcal{G}$. We need three qubit registers: one containing a qubit for each vertex (called $V$); one containing a qubit for each line ($L$), and one counting the number of invalid lines ($X$). The third register will need $\text{log}_{2}L$ qubits. The letters used to denote these registers will also denote their qubit sizes. We initialise all qubits in these registers to the state $\ket{0}$. See Fig. \ref{fig:circuit_V_L_x}, and below text for explanation. 

For the $V$ register, we place each vertex into a superposition using a Hadamard gate $H$. For any set of vertices $\{V_{i_{1}}, V_{i_{2}}, V_{i_{3}}\}$ sharing a line $L_{i}$ in our geometry, we place a ``gate" between the associated $V$ qubits and the $i$th $L$ qubit. This gate will flip the state of $L_{i}$ qubit if the states of the $\{V_{i_{j}}\}$ qubits render their line invalid. Call this gates $P$ (for positive $L_{i}$) and $N$ (for negative) gates. The are defined in Fig. \ref{fig:circuit_P_N_def}.
\begin{figure}[h!]
     \centering
     \includegraphics[width=0.4\textwidth]{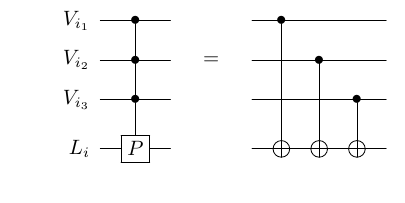}
     \includegraphics[width=0.45\textwidth]{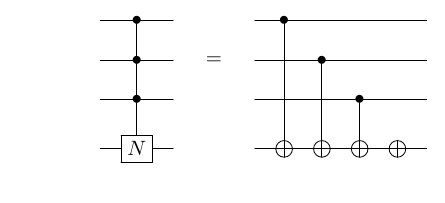}
     \caption{Definitions of the $P$ and $N$ gates for line $L_{i}$. The control qubits are in the $V$ register associated to points $V_{i_{j}}$ in the line $L_{i}$, which is associated with the target qubit in the $L$ register. They ensure the state of the target (initialised to $\ket{0}$) is in state $\ket{1}$ iff the line constraint is unsatisfied.}
     \label{fig:circuit_P_N_def}
 \end{figure}
Now the $L$ register will be in a superposition of all possible configurations of valid and invalid lines, entangled with the vertex assignments. 

To finish $U_{\mathcal{G}}$, we describe the $X$ register which counts the number of invalid lines. The state of the $\text{log}_{2}L$ qubits acts as a binary encoding of this number. First on the $2^{0}$ count wire in $X$, we apply $L$ CNOT targets, controlled by each of the qubits in $L$. Next on the $2^{1}$ wire, we apply ${L \choose 2}$ toffoli gates controlled by each pair of qubits in $L$. For the $2^{k}$ wire, we apply ${L \choose 2^{k}}$ targets controlled by each set of $2^{k}$ qubits in $L$. This leaves the $X$ register encoding in binary the count of $\ket{1}$ states in $L$, as desired. We require $\sum_{k=0}^{X}{L \choose 2^{k}}$ multi-control NOT gates for this. The implementation of $U_{\mathcal{G}}$ is expressed in Algorithm \ref{alg:U_G}. 

 \begin{figure}[h!]
     \centering
     \includegraphics[width=0.6\textwidth]{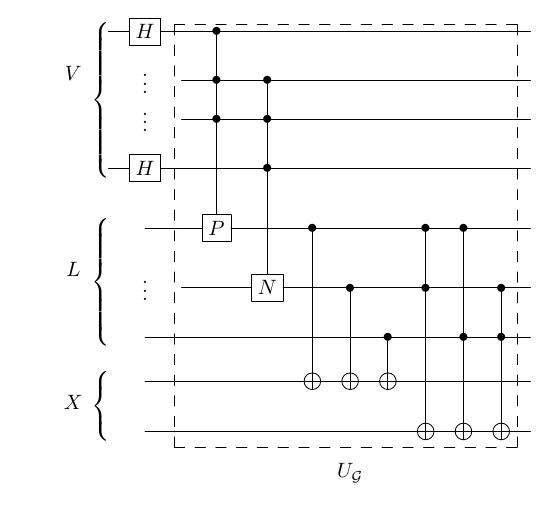}
     \caption{Circuit representing the $U_{\mathcal{G}}$ operator on the $V$, $L$ and $X$ registers, which counts the number of invalid lines for a given geometry $\mathcal{G}$ with superposition of HV assignments.}
     \label{fig:circuit_V_L_x}
 \end{figure}

The $\texttt{P}(\{q_{c}\}, q_{t}), \texttt{N}(\{q_{c}\}, q_{t}), \texttt{mcx}(\{q_{c}\}, q_{t})$ gates therein take as inputs $\{q_{c}\}$ control qubits and $q_{t}$ as the target, these being the $P$ gate, $N$ gate (Fig. \ref{fig:circuit_P_N_def}) and multi-control NOT gate. 

Note for clarity in the next section, we start indexing the $X$ qubits from the highest binary digit. The count of invalid lines is given by $x = x_{0}x_{1}\dots x_{l}$ with $l := \log_{2}{L}-1$, and $x_{0}$ being the binary digit representing $2^{l}$ invalid lines, $x_{1}$ encoding $2^{l-1}$ invalid lines, etc. We choose this encoding as it renders the binary digits in $x$ in the correct order, and leaves the for loops below a bit cleaner.
\begin{algorithm}[h!]
\caption{Implementation of $U_{\mathcal{G}}$}\label{alg:U_G}
\begin{algorithmic}
\For{$i \in (1, \dots, L)$}
    \If{line $L_{i}$ positive}
        \State $\texttt{circuit.P}(\{V_{i_{1}}, V_{i_{2}}, V_{i_{3}}\}, L_{i})$\;
    \Else
        \State $\texttt{circuit.N}(\{V_{i_{1}}, V_{i_{2}}, V_{i_{3}}\}, L_{i})$\;
      \EndIf
      \EndFor
\For{$k \in (0, \dots, l)$}
    \For{$\{L\}_{k} \subseteq L \text{ with } |\{L\}_{k}| = 2^{k}$} \Comment{All sets of $2^{k}$ lines in $L$}
        \State $\texttt{circuit.mcx}(\{L\}_{k}, x_{l-k})$\; \Comment{Flip the $k^{\text{th}}$ bit encoded in $X$ if all $\{L\}_{k}$ invalid}
        \EndFor
    \EndFor
\end{algorithmic}
\end{algorithm}

\subsection{The Comparison Segment}\label{sec:comparison_reg}
We now wish to implement $U_{\mathcal{C}}$, marking all states $\ket{j_{y}}$ in $V$ that leave $x \leq y$ for a chosen threshold $y$. We mark these input states by inducing a phase change $\ket{j_{y}} \rightarrow -\ket{j_{y}}$, allowing us to run Grover's algorithm to amplify the amplitudes of these states.

There exist already quantum circuits computing the truth value of comparisons such as $x \leq y$, for example in \cite{quantum_bitstring_comparison}, all requiring ancilla qubits. Here we describe a new comparison circuit that requires no additional qubits other than the ones encoding $x$ and $y$, and one encoding $\texttt{bool}(x\leq y)$.

This segment introduces a new register $Y$ of the same width $l+1$ as $X$, initialised to the binary value of $y$ via quantum NOT gates. We also introduce the qubit encoding $\texttt{bool}(x \leq y)$, initialised to $\ket{-}$ and labelled as such. We run the following Algorithm \ref{alg:U_C} which executes $U_{\mathcal{C}}$:
\begin{algorithm}[h!]
\caption{Implementation of $U_{\mathcal{C}}$}\label{alg:U_C}
\begin{algorithmic}
\State $\texttt{circuit.x}(\ket{-})$\;
\For{$i \in (0, \dots ,l)$}
        \State $\texttt{circuit.mcx\_var}(\{y_{0}, \dots , y_{i}\}, \{x_{i}\}, \ket{-})$\;
        \State $\texttt{circuit.cx}(x_{i}, y_{i})$
      \EndFor
\end{algorithmic}
\end{algorithm}
\\
The $\texttt{mcx\_var}(\{\overline{q_{c}}\}, \{q_{c}\}, q_{t})$ function takes as inputs $\{\overline{q_{c}}\}$ as flipped control qubits (i.e. activate iff they are all in the state $\ket{0}$), $\{q_{c}\}$ as regular control qubits, and $q_{t}$ as the target. The overall effect of Algorithm \ref{alg:U_C} is that it checks whether the binary number $x = x_{0}\dots x_{l}$ is less than or equal to the binary number $y = y_{0}\dots y_{l}$, and applies a NOT gate to the target $\ket{-}$ if so, inducing the desired phase change. It is expressed as a quantum circuit in Fig. \ref{fig:algo_y_geq_c_circuit}.
\begin{figure}[h!]
    \centering
    \includegraphics[width=0.6\textwidth]{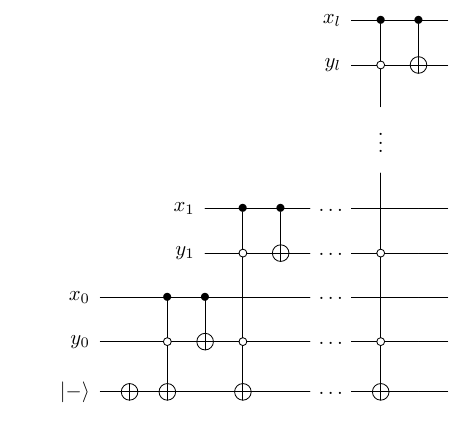}
    \caption{Quantum circuit showing $U_{\mathcal{C}}$ to check whether $x \leq y$, for $x = x_{0}x_{1}\dots x_{l}$, $y = y_{0}y_{1}\dots y_{l}$. If so, it flips the last wire. $X$ and $Y$ registers are intermingled for clarity.}
    \label{fig:algo_y_geq_c_circuit}
\end{figure}
In order to run the oracle multiple times we must restore the $Y$ register to its original value via $U_{\mathcal{C}}'$, see Algorithm \ref{alg:U_C_prime}.
\begin{algorithm}[h!]
\caption{Implementation of $U_{\mathcal{C}}'$, restoring $Y$ register to $y$}\label{alg:U_C_prime}
\begin{algorithmic}
\For{$i \in (0, \dots ,l)$}
        \State $\texttt{circuit.cx}(x_{l-i}, y_{l-i})$
      \EndFor
\end{algorithmic}
\end{algorithm}

For completion Algorithm \ref{alg:diffusion} provides the implementation of the Diffusion operator D.
\begin{algorithm}[h!]
\caption{Implementation of $\mathcal{D}$}\label{alg:diffusion}
\begin{algorithmic}
\State $\texttt{circuit.h}(V)$\;
\State $\texttt{circuit.x}(V)$\;
\State $\texttt{circuit.mcz}((V_{1}, \dots, V_{V-1}), V_{V})$\Comment{apply CZ gate with target as last qubit in $V$, controlled on all other $V$ qubits}\;
\State $\texttt{circuit.x}(V)$\;
\State $\texttt{circuit.h}(V)$\;
\end{algorithmic}
\end{algorithm}

\subsection{Analysis}\label{sec:analysis}
The result of the above is an oracle $U = U_{\mathcal{G}}^{\dagger}U_{\mathcal{C}}'U_{\mathcal{C}}U_{\mathcal{G}}$ that marks HV assignments giving an invalid count $x \leq y$. As per Grover, this is then followed by a diffusion operator $\mathcal{D}$ - this whole process repeated $t_{G} = \lfloor \frac{\pi}{4}\sqrt{\frac{n}{m}}\rfloor$ times for $n = 2^{V}$ total assignments and $m$ of them marked. Upon measurement on $V$ we are given a HV assignment with some $y^{(1)} \leq y$ invalid lines, which we use as a new threshold and run again. Taking each measurement outcome as a bisection on the sample space of possible solutions, one can consider the number of iterations needed as in $O(\log{L}) = O(\log{\log{n}})$. We refer to the number of measurements as $t_{A}$.

A brief note that for $m$ marked states and $n-m$ unmarked, Grover's algorithm amplifies the amplitudes of the smaller of the two sets. Not knowing a priori what the degree is, it is possible that we have marked a majority of states and return instead some new threshold $y^{(1)}$ with $y^{(1)} > \frac{L}{2}$. This is not an issue, as for 3 vertices per line we can simply flip all HV signs in such an assignment to find $L - y^{(1)}$ invalid lines, with $L-y^{(1)} \leq \frac{L}{2}$. If this value is below our initial guess $y$ we update the threshold and run again. The entire algorithm can then be written as follows in Algorithm \ref{alg:whole_alg}, where we take as an initial threshold the number of negative lines.
\begin{algorithm}[h!]
\caption{A quantum circuit to compute the degree $d$ of a geometry}\label{alg:whole_alg}
\begin{algorithmic}
\State $\texttt{circuit} \gets\texttt{QuantumCircuit}(V+L+X+Y+1, V)$\Comment{Circuit with qubits, classical bit sizes}; 
\State $y \gets |L_{\text{neg}}|$ 
\For{$t_{a} \in 1,\dots, t_{A}$}
    \State \texttt{circuit.h}$(V)$\;
    \State \texttt{circuit.assign}$(y, Y)$\;
    \For{$t_{g} \in 1,\dots,t_{G}$}
        \State \texttt{circuit.$U_{\mathcal{G}}$}$(V,L,X)$\;
        \State \texttt{circuit.$U_{\mathcal{C}}$}$(X,Y, \ket{-})$\;
        \State \texttt{circuit.$U_{\mathcal{C}}'$}$(X,Y)$\;
        \State \texttt{circuit.$U_{\mathcal{G}}^{\dagger}$}$(V,L,X)$\;
        \State \texttt{circuit.$\mathcal{D}$}$(V)$\;
    \EndFor;
    \State $S_{V} \gets $\texttt{circuit.measure}$(V)$\Comment{save input state with $x \leq y$ invalid lines}\;
    \State $x \gets $\texttt{invalid}$(S_{V})$\Comment{classically compute invalid lines}\;
    \State $y \gets \text{min}(x, L-x)$\;
\EndFor;
\State \Return $y$
\end{algorithmic}
\end{algorithm}

For maximal effectiveness the quantity $\frac{m}{n}$ is needed. This is in general not known, but can be estimated by assuming a normal distribution of invalid lines across assignments. As mentioned, any assignment with $x$ invalid lines can be paired with a corresponding assignment with $L-x$ lines, via flipping each HV. Thus any distribution will be symmetric around $\frac{L}{2}$. While the distribution will not be rigorously normal, as any contextual geometry will not have $x < d$ invalid lines for $d$ the degree, one can assume that the number of assignments of $x$ invalid lines increases with $x$ in $d \leq x \leq \frac{L}{2}$. This gives the estimation of $\frac{m}{n}$ as 
\begin{equation}
	\frac{m}{n} \approx \frac{1}{\sqrt{2\pi}}\int_{-\infty}^{z}e^{-t^{2}/2}dt
\end{equation}
with variance $\sigma$ suitably chosen for $P(y = 0) \approx 0$ in
\begin{equation}
	z = \frac{y - L/2}{\sigma}
\end{equation}

One can initialise with $y = |L_{\text{neg}}|$, and take that the algorithm will stabilise to $y = d$ with $t_{A} \in O(\log{L})$. This gives an overall time complexity of $O(\sqrt{n}\log{\log{n}})$.

\subsection{The Triangle Example}\label{sec:triangle}

Our examples are not constrained by geometries labeled by quantum operators, as this algorithm is suitable for a broader set of problems. Consider the geometry in Fig. \ref{fig:triangle_geom}, which consists of 3 vertices and 3 lines, the maximal set of satisfiable constraints being 2, and is not point-wise labellable by quantum operators satisfying all constraints.

\begin{figure}[h!]
    \centering
    \includegraphics[width=0.3\textwidth]{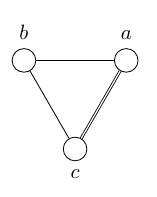}
    \caption{Geometry of 3 points, 3 lines, one negative in double font. Only 2 of the 3 line constraints can be satisfied by any HV assignment $(\pm 1)$ on the vertices.}
    \label{fig:triangle_geom}
\end{figure}

Here one can initialise with a threshold $y = 2$, and running the code on a simulator reveals an output (see Table \ref{tab:clean_sim_results}) of $y^{(1)} = 3$, indicative of a minority of unmarked elements being selected. This is because only $y^{(1)}=3$ has fewer HV assignments than $y^{(1)}\leq 2$, and Grover amplifies the smaller of the marked and unmarked sets of states.

\section{Results}\label{sec:results_chpt_6}
Algorithm \ref{alg:whole_alg} requires $V+L+X+Y+1$ qubits, restricting the geometries that can be tested on modern day quantum machines. In addition, without circuit transpilation there are around $G = 2(L+\sum_{k=0}^{l}{L \choose 2^{k}})+3X$ multi-control NOT gates per oracle implementation, each of which may be decomposed into many CNOTs depending on the ancilla method chosen. With transpilation for backend optimisation this can translate to many thousands of CNOT gates for even one iteration of $U$. Because of these width and depth requirements we are limited to testing only small geometries such as the grid (Fig. \ref{fig:grid_example}) on real backends and triangle (Fig. \ref{fig:triangle_geom}) on backends and transpiled simulators. Even then, transpilation-induced depth can easily increase circuit noise to levels surmounting any signal given by the output.

We first show in this section results on a clean, non-transpiled circuit to demonstrate the effectiveness of the algorithm. We follow this with results on both noisy, transpiled circuits and on real quantum backends for comparison. In the latter case, transpilation-induced noise eliminates any signal given by the algorithm. For both types of tests we use the IBM Quantum Experience Platform\cite{ibmq}, noting that the automated transpilation functions are semi-stochastic. 

\subsection{Results on a Clean Simulator}
The algorithm was run on both the triangle and grid geometries, using IBM's native \texttt{AerSimulator}. This was done with no noise model or transpilation as a test of the algorithm on an idealised, fully CNOT-connected quantum backend. For both geometries the experiment was run with 2,048 shots, with final HV assignment measured directly from the $V$ register and the number of invalid lines computed classically. The results are in Table \ref{tab:clean_sim_results}, with ``control" runs with no Grover segments ($t_{G}=0$) alongside algorithm results for comparison. In both geometries, the Grover segment clearly shifts likely measured assignments toward preferred $y^{(1)}$ states (where $y^{(1)} < y$ for the grid, and $y^{(1)} = 3$ for the triangle as explained in \S \ref{sec:triangle}).

\begin{table}[h!]
\centering
\begin{tabular}{|c|cccc|ccc|}
    \hline
    \multirow{2}{*}{Geometry} & \multirow{2}{*}{Backend} & 
    \multirow{2}{*}{$t_{G}$} & \multirow{2}{*}{Width} & \multirow{2}{*}{Depth} & \multicolumn{3}{c|}{Invalid Line Count} \\ \cline{6-8}
    & & & & & 1 & 3 & 5 \\ \hline \hline
    \multirow{2}{*}{Triangle} & \multirow{2}{*}{\texttt{AerSimulator}} & 0 & 11 & 2 & 1,515 & 533 & -\\
    & & 1 & 11 & 33 & - & 2,048 & -\\\hline \hline
    \multirow{2}{*}{Grid} & \multirow{2}{*}{\texttt{AerSimulator}} & 0 & 22 & 2 & 405 & 1,264 & 379 \\
    & & 1 & 22 & 85 & 1,947 & 74 & 27 \\\hline
\end{tabular}
\caption{Results for both triangle and grid geometries on clean simulated backends with 2,048 shots per run, and no circuit transpilation, with $y = 2$, $t_{A} = 1$ in all cases. Counts for number of invalid lines given on right, e.g. the grid geometry recorded 1,947 assignments with 1 invalid line, with $t_{G}=1$ oracle call. Control runs with no Grover section given in $t_{G}=0$ rows indicative of invalid line distribution across all HV assignments. Results indicate a clear preference for $y^{(1)} \leq y = 2$ for the grid, and $y^{(1)} = 3$ for the triangle, which is expected for a minority of states being unmarked.}
\label{tab:clean_sim_results}
\end{table}

\subsection{Results on Real Quantum Backends}
The algorithm was run on the \texttt{ibm\_marrakesh} backend for both geometries. Results are shown tabularly in Table \ref{tab:real_qc_results} and visually in Appendix \ref{app:Grover_alg_results} for both the real backend and on simulators.  In both \texttt{ibm\_marrakesh} and on noisy fake backend simulators, the circuit was transpiled using the native transpilation process. IBM allows for a variety of multi-control NOT gate implementations using different numbers of ancilla qubits. These different implementations (`noancilla', `recursion', `v-chain') were tested under the `Method' heading.

In all tests as in the clean simulations, 2,048 shots were run with initial threshold $y=2$ was chosen, $t_{A}=1$ and measurement done on $V$. Effects of transpilation on depth are clearly seen in all cases, indicating a noisier output.

\begin{table}[h!]
\centering
\begin{tabular}{|c|cccc|ccc|}
    \hline
    \multirow{2}{*}{Geometry} & \multirow{2}{*}{Backend} & \multirow{2}{*}{Method} & \multirow{2}{*}{Width} & \multirow{2}{*}{Depth} & \multicolumn{3}{c|}{Invalid Line Count} \\ \cline{6-8}
    & & & & & 1 & 3 & 5 \\ \hline \hline
    \multirow{3}{*}{Triangle} & \multirow{3}{*}{\texttt{ibm\_marrakesh}} & noancilla & 11 & 591 & 1,410 & 638 & -\\
    &  & recursion & 11 & 690 & 1,347 & 701 & -\\ 
    & & v-chain & 13 & 575 & 1,412 & 636 & -\\ \hline \hline
    \multirow{3}{*}{Grid} & \multirow{3}{*}{\texttt{ibm\_marrakesh}} & noancilla & 22 & 9,642 & 380 & 1,288 & 380 \\
    &  & recursion & 23 & 8,046 & 383 & 1,290 & 375 \\ 
    & & v-chain & 122 & 5,669 & 391 & 1,288 & 370 \\ \hline
\end{tabular}
\caption{Results for both triangle and grid geometries on \texttt{ibm\_marrakesh} with different ancilla qubit methods. In all cases, $y = 2$, $t_{A} = t_{G} = 1$. Compare invalid line counts with idealised $t_{G}=1$ results from Table \ref{tab:clean_sim_results}. Width and depth requirements change per ancilla methods, with transpilation increasing depth and therefore noise. Invalid line counts do not appreciably differ from control simulations with $t_{G}=0$ (see Figs. \ref{fig:real_qc_results_triangle}, \ref{fig:fake_brisbane_results_triangle} and \ref{fig:real_qc_results_grid} in Appendix \ref{app:Grover_alg_results}). }
\label{tab:real_qc_results}
\end{table}

\section{An Alternate Approach - Quasi-Grover Search}\label{sec:quasi_grover}
The standard Grover algorithm for computing the degree $d$ is limited by high circuit width and depth, making it impractical for geometries larger than the grid on current noisy intermediate-scale quantum (NISQ) computers. This section introduces an alternative``quasi-Grover" approach that eliminates the $X$ and $Y$ registers, significantly reducing CNOT gate requirements. It also condenses the $L$ register to a single qubit and removes the need for an initial threshold guess $y$, instead directly increasing measurement probabilities of states with extremal numbers of invalid lines. While this offers substantial circuit simplification, it comes at the cost of reduced maximal measurement probability for target states in larger geometries, an issue addressed in Section \ref{sec:quasi_grover_alts}.

The quasi-Grover method marks elements based on their relative phase, encoding invalid line information there. Unlike standard Grover search, which uses a $\pi$ phase shift (i.e., $e^{i\pi}=-1$), this approach employs a generalised phase kickback $e^{i\beta}$ ($0 \leq \beta \leq 2\pi$) using a controlled phase gate $\text{C}_{\beta}$.
For a single qubit $L$ register initialised to $\ket{0}$, the `P' and `N' gates (Fig. \ref{fig:circuit_P_N_def}) are used to flip its state if a line is invalid. This $L$ qubit then controls a $\text{C}_{\beta}$ gate targeting a constant $\ket{1}$ qubit, inducing an $e^{i\beta}$ phase for each invalid line. The $L$ qubit is then reset. These operations are combined into `P$_{\beta}$' and `N$_{\beta}$' gates (Fig. \ref{fig:P_N_gates_beta}). This results in a phase $e^{i\ell_j\beta}$ for each state $\ket{j}$ in the $V$ register, where $\ell_j$ is the number of invalid lines for assignment $j$.

\begin{figure}[h!]
    \centering
    \includegraphics[width=0.4\textwidth]{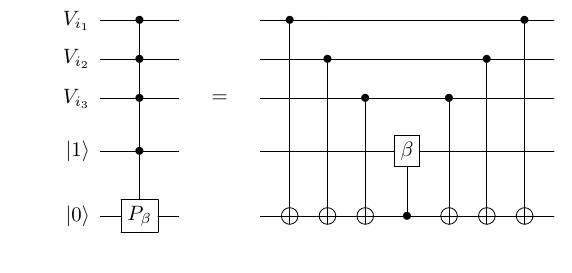}
    \includegraphics[width=0.45\textwidth]{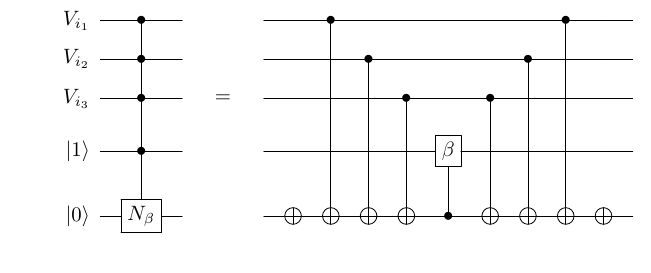}
    \caption{$P_{\beta}$ and $N_{\beta}$ gates defining evolution in \eqref{eq:beta_evolution}. $V$ and $L = \ket{0}$ registers are entangled in the sense that the latter flips to $\ket{1}$ iff the assignment $\ket{V_{i_{1}}, V_{i_{2}}, V_{i_{3}}}$ gives an invalid line, via the first gates up to the control-phase gate $C_{\beta}$. A phase of $\beta$ is then induced if this line is invalid. The entanglement is then undone to restore the $L$ qubit to its original state, allowing it to be reused for all lines in the geometry (see Fig. \ref{fig:quasi_grover_marking}). }
    \label{fig:P_N_gates_beta}
\end{figure}

\begin{figure}[h!]
    \centering
    \includegraphics[width=0.6\textwidth]{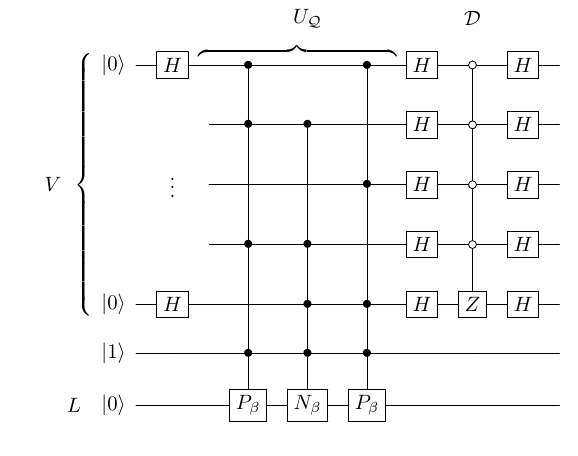}
    \caption{An implementation of a single quasi-Grover query $(\mathcal{D}\otimes I^{\otimes 2})U_{\mathcal{Q}}$ with generalised phase kickbacks composing $U_{\mathcal{Q}}$ - see Fig. \ref{fig:P_N_gates_beta}. The usual diffusion operator $\mathcal{D}$ acts on $V$ and amplifies the amplitudes of those states with the extremal phases. The advantage here is only one qubit needed for the $L$ register, no $X$ or $Y$ registers at all, and no need to specify a threshold $y$.}
    \label{fig:quasi_grover_marking}
\end{figure}

Again we initialise with a superposition over $V$ as before, but now ``mark" states via relative phase shift that count the number of invalid lines. This is done via $U_{\mathcal{Q}}$, composed of `P$_{\beta}$' and `N$_{\beta}$' gates for all lines, see \eqref{eq:beta_evolution} and Fig. \ref{fig:quasi_grover_marking}.

\begin{equation}\label{eq:beta_evolution}
\begin{split}
\ket{\psi_0} &= \frac{1}{\sqrt{n}}\sum_j \ket{j}\ket{1}\ket{0} \\
\mapsto U_Q \ket{\psi_0} &= \frac{1}{\sqrt{n}}\sum_j e^{i\ell_j\beta}\ket{j}\ket{1}\ket{0}
\end{split}
\end{equation}

Each quasi-Grover query follows this unitary operation with the standard diffusion operator $\mathcal{D}$ acting only on the $V$ register. Recall $\mathcal{D}\alpha_j\ket{j} = (2\overline{\alpha} - \alpha_j)\ket{j}$, with $\overline{\alpha}$ being the mean amplitude. To ensure maximal effect of $\mathcal{D}$ on our states with minimal or maximal invalid lines, we need $U_{\mathcal{Q}}$ to position their amplitudes farthest from the mean. To achieve this we set $\beta = \frac{2\pi}{L}$, defining $\delta_j := \ell_j \frac{2\pi}{L}$ as the relative phase of a given state $\ket{j}$ after marking, with $0 \leq \delta_j \leq 2\pi$. This employs an advantageous symmetry in our 3-point line geometries, bolstered by the following observation:

\begin{obs}\label{obs:half_invalid_lines}
For a geometry $\mathcal{G}$ with $V$ vertices, $L$ lines, and complete set of HV assignments, the distribution of invalid line counts across assignments resembles a normal distribution, with roughly half of all assignments having $\ell \approx \frac{L}{2}$ invalid lines.
\end{obs}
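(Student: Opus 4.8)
The plan is to reduce the statement to a moment computation for the Hamming weight of a coset of a binary linear code, where the mean and variance come out \emph{exactly} and the Gaussian shape follows from a genericity hypothesis on the geometry. First I would fix the randomness: draw an HV assignment $a = (a_{1},\dots,a_{V}) \in \mathbbm{F}_{2}^{V}$ uniformly, so $e(\mathcal{O}_{k}) = (-1)^{a_{k}}$. For each line $L_{i}$ with point set $\{V_{i_{1}},V_{i_{2}},V_{i_{3}}\}$ and sign bit $b_{i}$, the ``$L_{i}$ invalid'' indicator is the affine Boolean function $X_{i}(a) = a_{i_{1}} \oplus a_{i_{2}} \oplus a_{i_{3}} \oplus b_{i} = c_{i}\cdot a \oplus b_{i}$, where $c_{i} \in \mathbbm{F}_{2}^{V}$ is the weight-$3$ incidence vector of $L_{i}$. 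The invalid-line count is $\ell(a) = \sum_{i=1}^{L} X_{i}(a)$ (sum over $\mathbb{Z}$), and the claim is about the distribution of $\ell$ under uniform $a$.

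Next, the exact statistics. Since $c_{i}\neq 0$, each $X_{i}$ is $\mathrm{Bernoulli}(1/2)$, so $\mathbb{E}[\ell] = L/2$. For distinct lines the incidence vectors are distinct weight-$3$ vectors (the incidence structures considered have no repeated lines), hence $c_{i}+c_{j}\neq 0$ and $X_{i}\oplus X_{j} = (c_{i}+c_{j})\cdot a \oplus (b_{i}+b_{j})$ is again a nontrivial affine form, uniform on $\mathbbm{F}_{2}$; from $P(X_{i}=1)=P(X_{j}=1)=P(X_{i}\oplus X_{j}=1)=\tfrac12$ one gets $P(X_{i}=X_{j}=1)=\tfrac14$, so the $X_{i}$ are pairwise independent and $\mathrm{Var}(\ell) = \sum_{i}\mathrm{Var}(X_{i}) = L/4$. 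Equivalently, $a \mapsto (X_{1}(a),\dots,X_{L}(a))$ is an affine map $\mathbbm{F}_{2}^{V}\to\mathbbm{F}_{2}^{L}$, so $\ell$ is distributed (with uniform multiplicity $2^{V-\dim\mathcal{C}}$) as the Hamming weight on a coset of the ``line code'' $\mathcal{C}=\langle c_{1},\dots,c_{L}\rangle$. The involution $a\mapsto\bar a$ sends $X_{i}\mapsto 1\oplus X_{i}$ (each line has an odd number of points, matching the HV sign-flip used elsewhere in the paper), hence $\ell\mapsto L-\ell$, so the distribution is \emph{exactly symmetric about $L/2$}. Chebyshev's inequality then gives, for every $k$, that at least a $1-1/k^{2}$ fraction of assignments have $|\ell - L/2|\le k\sqrt{L}/2$; since $\sqrt{L}=o(L)$ this is the rigorous core of ``roughly half the assignments have $\ell\approx L/2$''.

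For the ``resembles a normal distribution'' clause I would pass to $S = \sum_{i}(-1)^{X_{i}} = L-2\ell$ and expand $\mathbb{E}[S^{m}] = \sum_{i_{1},\dots,i_{m}} \mathbb{E}\big[(-1)^{X_{i_{1}}\oplus\cdots\oplus X_{i_{m}}}\big]$; each term is $\pm 1$ when $c_{i_{1}}+\cdots+c_{i_{m}}=0$ (sign fixed by $b_{i_{1}}+\cdots+b_{i_{m}}$) and $0$ otherwise, so the moments of $\ell$ are signed counts of $m$-tuples of lines whose incidence vectors sum to zero, i.e.\ of short relations in $\mathcal{C}$. If the geometry has no nontrivial such relation of length below a threshold growing with $L$, only the ``pairing'' tuples survive to leading order, giving $\mathbb{E}[S^{2m}] = (2m-1)!!\,L^{m}(1+o(1))$ and $\mathbb{E}[S^{2m+1}]=o(L^{m+1/2})$, i.e.\ asymptotic normality with variance $L$ (equivalently $\ell$ normal, mean $L/2$, variance $L/4$). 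One then checks this genericity condition for the families of interest — the doily, two-spread, eloily, and $\mathcal{W}(2N-1,2)$ — or, since the observation is deliberately qualitative, simply records it as the observed behaviour consistent with the exact mean, variance and symmetry above.

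The main obstacle is precisely this last step: coset weight distributions of arbitrary binary codes need \emph{not} be approximately Gaussian, and a contextual geometry by definition carries a relation $\sum_{i\in D} c_{i}=0$ with $\sum_{i\in D} b_{i}=1$ (forcing an odd number of invalid lines among the $|D|$ lines of $D$ — this is what $d>0$ encodes), so short relations cannot be excluded a priori. Thus the robustly provable content is the pair of exact identities $\mathbb{E}[\ell]=L/2$, $\mathrm{Var}(\ell)=L/4$, the exact symmetry about $L/2$, and Chebyshev concentration; the normal-shape assertion is a genericity statement whose verification amounts to bounding the number of low-weight codewords of $\mathcal{C}$ (equivalently, low-weight line-dependencies), and controlling those for each target geometry is the delicate part.
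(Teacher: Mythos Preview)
The paper does \emph{not} prove this observation at all: immediately after stating it, the authors write ``the precise statement of Observation~\ref{obs:half_invalid_lines} will not be proven here,'' and they only establish the bilateral symmetry $\ell\leftrightarrow L-\ell$ as a separate Proposition (via the bit-flip $j\leftrightarrow\tilde{j}$ on odd-point lines), supporting the rest of the observation with numerical tables and plots for the grid, two-spread, doily and eloily.

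Your proposal therefore goes well beyond the paper. The symmetry argument you give (via $a\mapsto\bar a$ flipping every $X_i$) is exactly the paper's proof of its Proposition. But your exact computation of $\mathbb{E}[\ell]=L/2$ and, via pairwise independence of the affine forms $X_i$, $\mathrm{Var}(\ell)=L/4$, together with the Chebyshev bound, supplies precisely the rigorous kernel the paper leaves unproven. Your moment-method discussion of asymptotic normality, and your explicit identification of the obstruction --- short linear relations among the incidence vectors $c_i$, which a contextual geometry necessarily has --- is also sharper than anything in the paper, which simply concedes the distribution is ``not exactly normal'' because of the cutoff at $\ell=d$. In short: your approach is correct, strictly more informative than the paper's treatment, and honest about where the heuristic part begins.
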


For a contextual geometry the distribution is not exactly normal, as there are no assignments with $\ell < d$ or $\ell > L-d$ invalid lines, by definition of $d$. While the precise statement of Observation \ref{obs:half_invalid_lines} will not be proven here, the bilateral symmetry of the invalid line distributions will be:

\begin{proposition}
For a geometry $\mathcal{G}$ with odd number of points per line, the distribution of invalid lines across all HV assignments is symmetric under interchange of $\ell \leftrightarrow L-\ell$.
\end{proposition}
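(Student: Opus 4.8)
The plan is to exhibit an explicit \emph{involution} on the set of HV assignments that exchanges the roles of valid and invalid lines. Let $\mathcal{G}$ have $V$ vertices and $L$ lines, where line $\mathcal{C}_{i}$ consists of an odd number $k_{i}$ of points and carries a sign $s_{i} \in \{+1,-1\}$ (so $s_{i} = -1$ for the negative lines). An HV assignment is a map $a$ from vertices to $\{+1,-1\}$, and line $\mathcal{C}_{i}$ is invalid for $a$ precisely when $\prod_{v \in \mathcal{C}_{i}} a(v) \neq s_{i}$. Define $\tau(a) := -a$, i.e.\ the assignment obtained by flipping every vertex value. This is exactly the sign-flip pairing invoked informally in Section~\ref{sec:analysis}; here we make its effect on line validity precise.

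The key computation is that, for every line,
\[
\prod_{v \in \mathcal{C}_{i}} (\tau a)(v) \;=\; \prod_{v \in \mathcal{C}_{i}} \bigl(-a(v)\bigr) \;=\; (-1)^{k_{i}}\prod_{v \in \mathcal{C}_{i}} a(v) \;=\; -\prod_{v \in \mathcal{C}_{i}} a(v),
\]
using that $k_{i}$ is odd. Hence the line product flips sign under $\tau$, so $\mathcal{C}_{i}$ is invalid for $\tau(a)$ iff it is valid for $a$. Equivalently, in the $\mathbbm{F}_{2}$ formulation $\sum_{v \in \mathcal{C}_{i}} a_{v} = b_{i} \ \mathrm{mod}\ 2$, applying $\tau$ amounts to adding the all-ones vector to $a$, which changes the left-hand side of constraint $i$ by $k_{i} \equiv 1 \ \mathrm{mod}\ 2$ and therefore toggles its satisfaction.

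It follows that if $a$ leaves exactly $\ell$ lines invalid, then $\tau(a)$ leaves exactly $L-\ell$ lines invalid. Since $\tau$ is an involution ($\tau^{2} = \mathrm{id}$), and in particular a bijection of the $2^{V}$ assignments onto themselves, it restricts to a bijection between the set of assignments with $\ell$ invalid lines and the set of assignments with $L-\ell$ invalid lines. These two sets therefore have equal cardinality for every $\ell$, which is precisely the claimed symmetry of the distribution under $\ell \leftrightarrow L-\ell$.

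There is essentially no hard step here; the only point that requires care is that the hypothesis of an odd number of points per line is used in an essential way. It is exactly what guarantees $(-1)^{k_{i}} = -1$ for \emph{every} line, so that $\tau$ toggles all line constraints simultaneously; were some line to have an even number of points, its validity would be preserved by $\tau$ and the argument would fail. (The companion statement Observation~\ref{obs:half_invalid_lines}, about the distribution being approximately normal and concentrated near $L/2$, is strictly stronger and would need a separate, genuinely combinatorial argument, which is why only the symmetry is proven here.)
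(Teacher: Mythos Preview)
Your proof is correct and follows essentially the same approach as the paper: both exhibit the global sign-flip involution $a \mapsto -a$ (the paper's $j \leftrightarrow \tilde{j}$) and observe that, because each line has an odd number of points, this toggles the validity of every line, hence maps assignments with $\ell$ invalid lines bijectively to those with $L-\ell$. Your write-up is simply more explicit about the computation and the role of the odd-points hypothesis.
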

\begin{proof}
Examining any given HV assignment $j$ and any given odd-point line $L_{i}$, assigning the bit-flipped values $\tilde{j}$ renders $L_{i}$ invalid (resp. valid) if $j$ rendered it valid (resp. invalid). Thus the interchange $j \leftrightarrow \tilde{j}$ exchanges valid and invalid lines $\ell \leftrightarrow L - \ell$, and as the set of HV assignments is complete (for each $j$ there exists $\tilde{j}$) we get the desired result. 
\end{proof}

In particular, this means that each assignment $j$ is phase shifted by $\delta_{j}$ via $U_{\mathcal{Q}}$, and its bitflipped assignment $\tilde{j}$ is shifted by $\delta_{\tilde{j}} = (L - \ell_j) \frac{2\pi}{L} \equiv -\delta_{j}$. For $\ell = d$, the phase shift is minimal, and for $\ell = L - d$ it is maximal, see Fig. \ref{fig:phases_plot}. We present exact distributions of invalid lines for four geometries: the grid (Fig. \ref{fig:grid_example}), the two-spread (Fig. \ref{fig:doily}, right), the doily (Fig. \ref{fig:doily}, left) and finally the eloily (Fig. \ref{fig:eloily}). These are given graphically in Fig. \ref{fig:invalid_line_distributions} and numerically in Table \ref{tab:invalid_line_distributions} for the first three, and all satisfy Observation \ref{obs:half_invalid_lines}.

The consequence of this is that after a quasi-Grover oracle $U_{\mathcal{Q}}$, the mean amplitude $\overline{\alpha}$ will be real (owing to the conjugate symmetry $\delta_{\tilde{j}} = -\delta_{j}$) and negative (owing to the heavily-weighted cluster of amplitudes near $\delta_{j} \approx \pi$, via Observation \ref{obs:half_invalid_lines}). The extremal phases associated to $\ell \in \{d, L-d\}$ thus have the largest distance from the mean, as desired.

\begin{algorithm}[h!]
\caption{The quasi-Grover search algorithm for $d$}\label{alg:quasi_grover}
\begin{algorithmic}
\State $\texttt{circuit} \gets \texttt{QuantumCircuit}(V+2, V)$\Comment{Circuit with fewer qubits than Algorithm \ref{alg:whole_alg}}\;
\State \texttt{circuit.h}$(V)$\;
\For{$t \in (1, \dots, t_{\text{opt}})$}
    \For{$i \in (1, \dots, L)$}
        \If{$L_{i}$ positive:}
            \State \texttt{circuit}.$P_{\beta}(V, \ket{1}, \ket{0})$\;
        \Else;
            \State \texttt{circuit}.$N_{\beta}(V, \ket{1}, \ket{0})$\;
        \EndIf;
    \EndFor;
    \State \texttt{circuit}.$\mathcal{D}(V)$\;
\EndFor;
\State $S_{V} \gets $\texttt{circuit.measure}$(V)$\;
\State $x \gets $\texttt{invalid}$(S_{V})$\Comment{classically compute $x$ invalid lines}\;
\State $y \gets \text{min}(x, L-x)$\;
\State \Return $y$
\end{algorithmic}
\end{algorithm}

\begin{figure}[h!]
    \centering
    \includegraphics[width=0.4\textwidth]{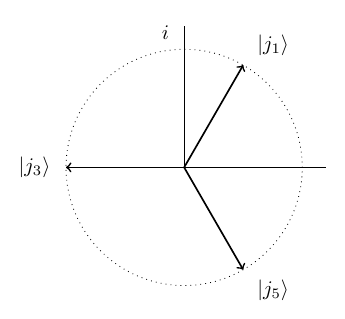}
    \includegraphics[width=0.5\textwidth]{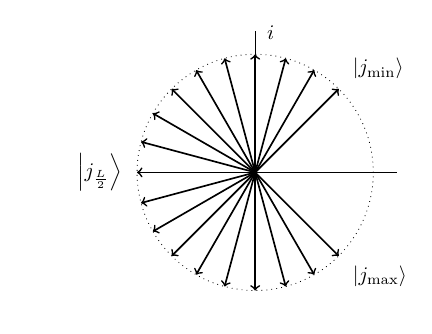}
    \caption{Phases $e^{i\delta_{j}}$ in the complex plane, with dotted circle showing magnitude $\frac{1}{\sqrt{n}}$. Left is for the grid Fig. \ref{fig:grid_example}, with $\ket{j_{1}}$ assignments with 1 invalid line and phase $\delta_{j_{1}} = \frac{\pi}{3}$, etc. Right is general symmetric geometry with $\ket{j_{\text{min}}}, \ket{j_{\text{max}}}$ the assignments with the minimal / maximal number of invalid lines. Under computational checks, most assignments are clustered around $\ket{j_{\frac{L}{2}}}$ with roughly half of all lines invalid.}
    \label{fig:phases_plot}
\end{figure}

\begin{figure}[h!]
    \centering
    \includegraphics[width = 0.8\textwidth]{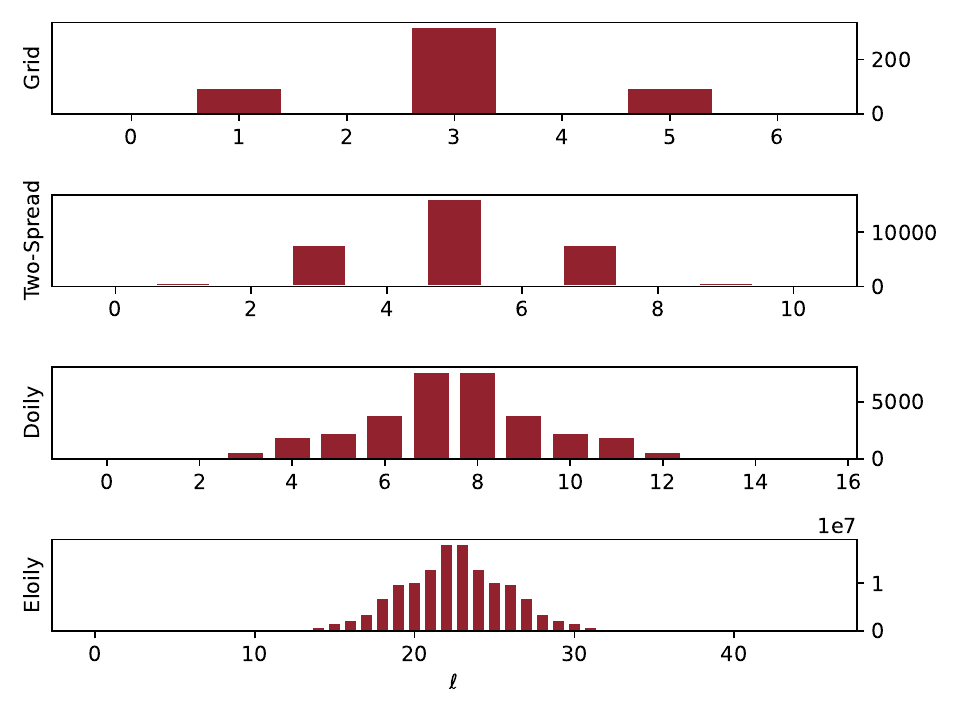}
    \caption{Distributions of assignments $|j_{\ell}|$ for various invalid line values $\ell$ of the indicated geometries. See Table \ref{tab:invalid_line_distributions} for numerical values for the first three. Geometries with odd number of vertices per line have clear symmetry about $\ell \leftrightarrow L-\ell$. For the eloily, nonzero $|j_{\ell}|$ begins at $\ell=9$, with $|j_{9}|=2,560$ not visible due to insignificant size relative to $n=2^{27}$.}
    \label{fig:invalid_line_distributions}
\end{figure}

\begin{table}[h!]
\centering
\begin{tabular}{|c|ccc|} \hline
$\ell$ & Grid & Two-Spread & Doily \\ \hline \hline
1 & ${3}\cdot{2^{5}}$ & ${5}\cdot{2^{7}}$ & \\
2 & & & \\
3 & ${5}\cdot{2^{6}}$ & ${15}\cdot{2^{9}}$ & ${5}\cdot{2^{7}}$ \\
4 & & & ${15}\cdot{2^{7}}$ \\
5 & ${3}\cdot{2^{5}}$ & ${63}\cdot{2^{8}}$ & ${9}\cdot{2^{8}}$ \\
6 & & & ${15}\cdot{2^{8}}$ \\
7 & & ${15}\cdot{2^{9}}$ & ${15}\cdot{2^{9}}$ \\
8 & & & ${15}\cdot{2^{9}}$ \\
9 & & ${5}\cdot{2^{7}}$ & ${15}\cdot{2^{8}}$ \\
10 & & & ${9}\cdot{2^{8}}$ \\
11 & & & ${15}\cdot{2^{7}}$ \\
12 & & & ${5}\cdot{2^{7}}$ \\ \hline \hline
$n$ & $2^{9}$ & $2^{15}$ & $2^{15}$\\ \hline
\end{tabular}
\caption{Number of assignments $|j_{\ell}|$ for a given number $\ell$ of invalid lines, for symmetric geometries grid, two-spread and doily. In all cases, $\sum_{\ell}|j_{\ell}| = n$. As shown here $\ell \approx \frac{L}{2}$ contains about half of all assignments.}
\label{tab:invalid_line_distributions}
\end{table}

Now that we have numerical justification that a quasi-Grover query $\mathcal{D} U _{\mathcal{Q}}$ should work, we can examine its effectiveness. Simulating Algorithm \ref{alg:quasi_grover} on the four mentioned geometries, we can compute after each query the probability of measuring a state $j$ with invalid lines $\ell$, denoted $P(\ell)$. This is plotted over $t$ queries for $d \leq \ell \leq \frac{L}{2}$ in Fig. \ref{fig:P_ell_evolution}, and the value of $P(d)$ at the optimal query time $t_{\text{opt}}$ is given in Table \ref{tab:P_d_numeric_vals}. The value of $P(\ell)$ is periodic in $t$, so $t_{\text{opt}}$ is defined as the time coordinate of the first local maxima. The effectiveness (or lack thereof) of this quasi-Grover approach can be seen in Figs. \ref{fig:P_l_distributions}, \ref{fig:P_l_distributions_eloily}, where the highlighted data indicate probabilities of measuring a desired state compared to baseline (no queries). The algorithm is very effective for small geometries like the grid, but loses accuracy as the size of the geometry grows. Despite this decreasing accuracy, the advantages remain that this approach uses far fewer qubits than the standard Grover approach, and does not require an artificial threshold guess $y$. In order to address the accuracy issues, we introduce some tweaks to the quasi-Grover method in the next sections.

\begin{table}[h!]
\centering
\begin{tabular}{|cc|cc|} \hline
Geometry & $t_{\text{opt}}$ & $P(d)|_{t=0}$ & $P(d)|_{t_{\text{opt}}}$ \\ \hline
Grid & 2 &  0.1870 & 0.4999 \\
Two-Spread & 4 & 0.0195 & 0.2859 \\
Doily & 2 & 0.0195 & 0.0997 \\ 
Eloily & 2 & 0.00002 & 0.000155 \\ \hline
\end{tabular}
\caption{Increase in $P(d)$ - probability of measuring a state with $d$ invalid lines - from no quasi-Grover queries ($t = 0$) to value after optimal queries $t_{\text{opt}}$, shown in Fig. \ref{fig:P_ell_evolution}. Probability of extracting $d$ from measurement is given by $2P(d)$ as $P(L-d)=P(d)$ by symmetry.}
\label{tab:P_d_numeric_vals}
\end{table}

\begin{figure}[h!]
\centering
\includegraphics[width=0.45\textwidth]{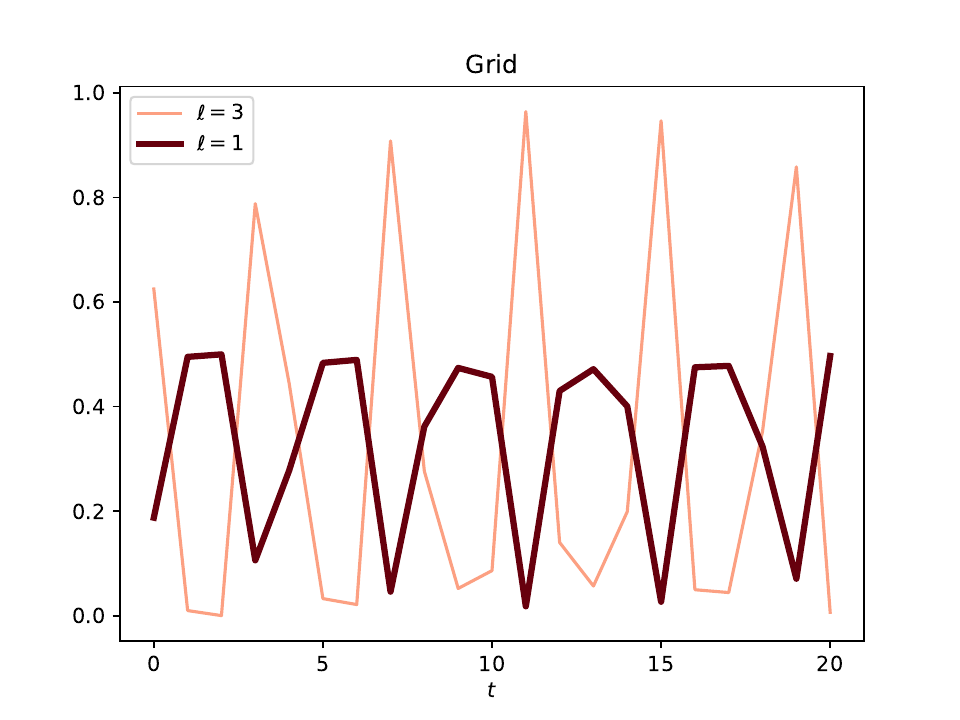}
\includegraphics[width=0.45\textwidth]{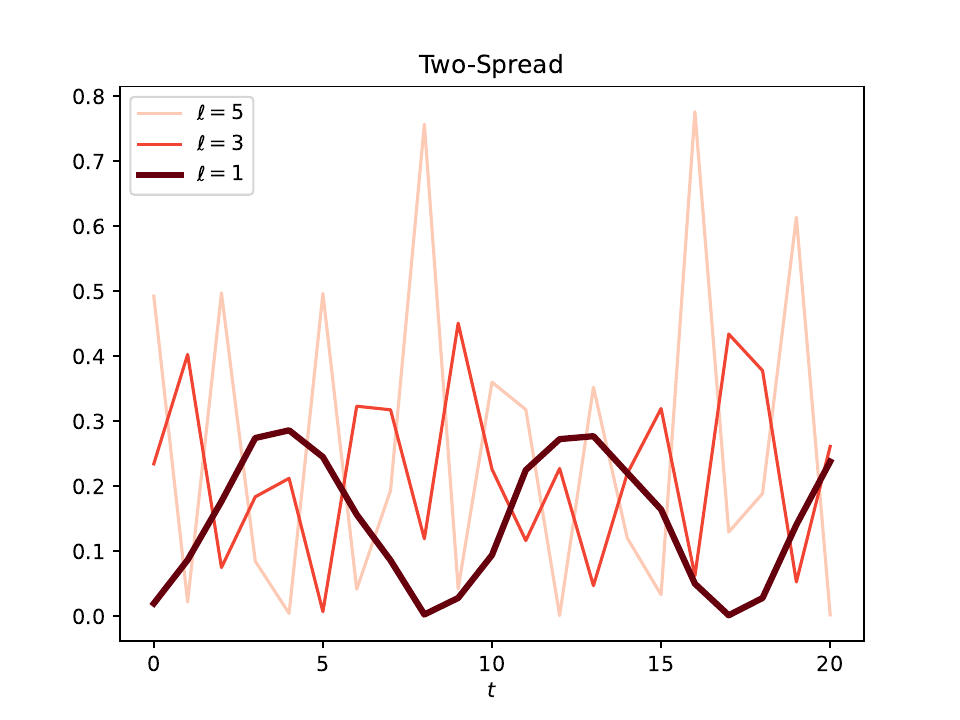}
\includegraphics[width=0.45\textwidth]{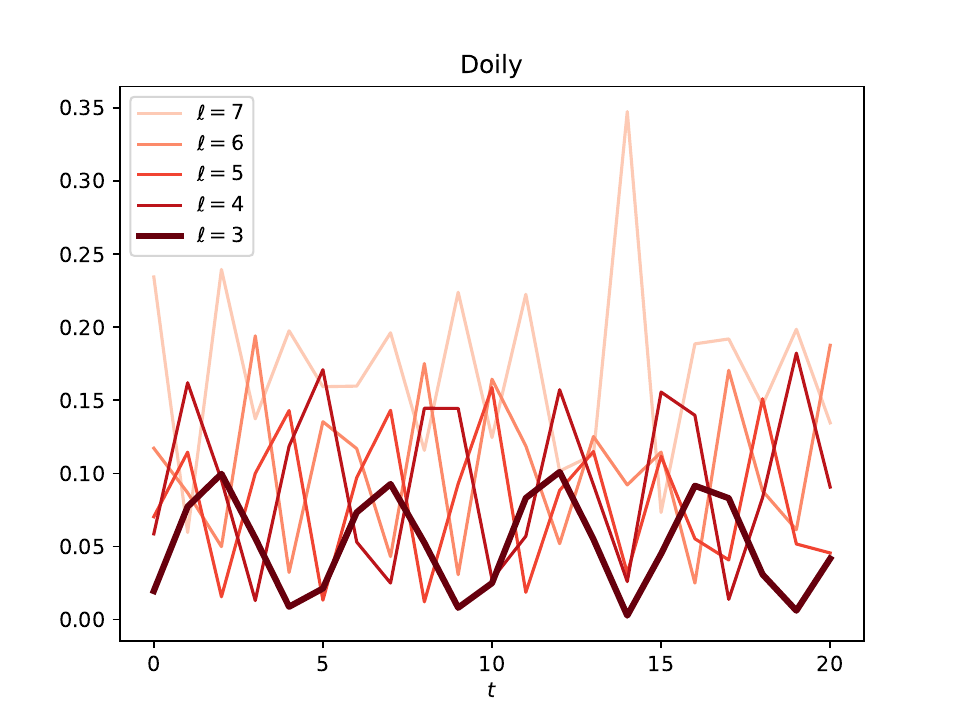}
\includegraphics[width=0.45\textwidth]{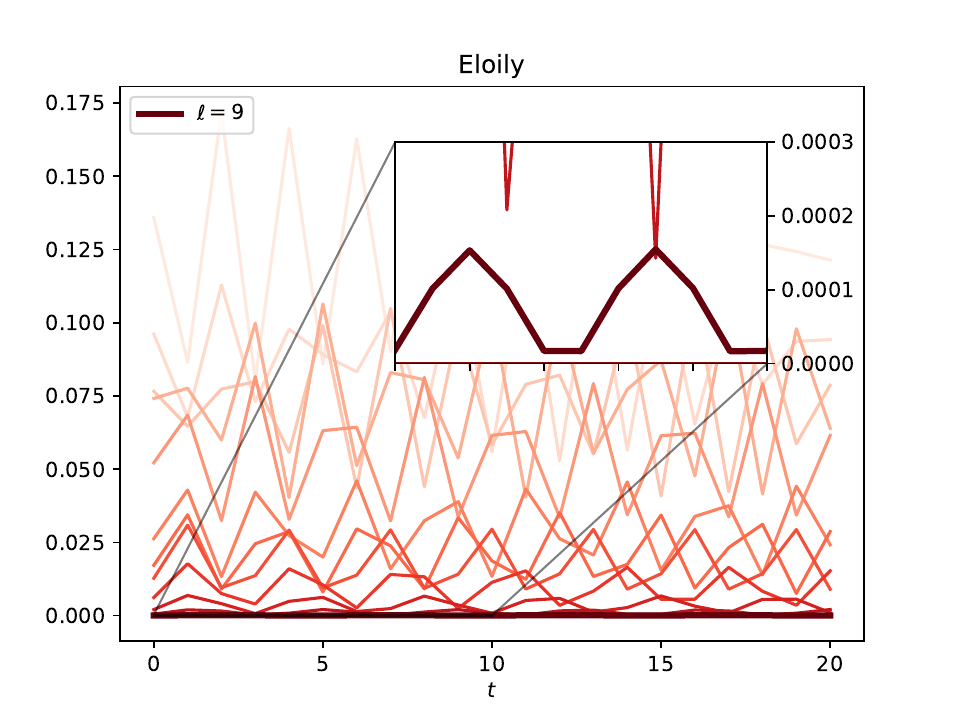}
\caption{Evolution of $P(\ell)$ over $t$ quasi-Grover queries for grid (top left), two-spread (top right), doily (bottom left) and eloily (bottom right, with insert on $t\in [0, 10]$). Different values of $\ell \leq \frac{L}{2}$ shown, with larger $\ell$ equal to those of $L - \ell$ via symmetry. Values for $P(d)$ given in bold, increasing from baseline to optimal $t$ with numeric values in Table \ref{tab:P_d_numeric_vals}.}
\label{fig:P_ell_evolution}
\end{figure}

\begin{figure}[h!]
\centering
\includegraphics[width=0.8\textwidth]{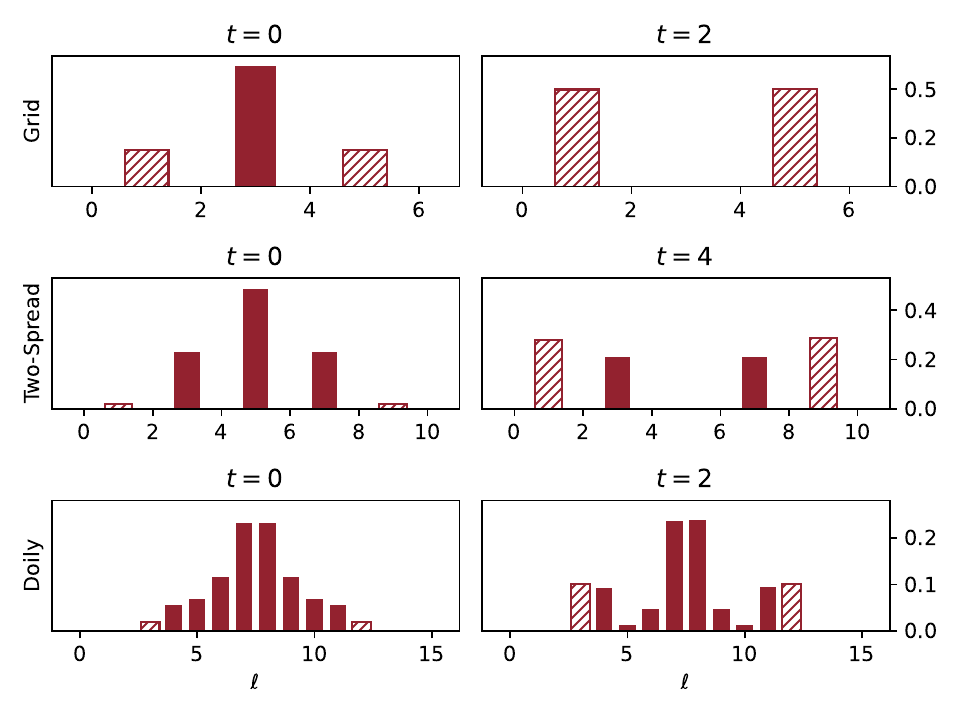}
\caption{Probabilities $P(\ell)$ for labelled geometries both before (left) and after (right) optimal quasi-Grover queries. Distributions on left are normalised versions of Fig. \ref{fig:invalid_line_distributions}, on right are measurement results after simulated circuits without noise. Quasi-Grover protocol increases values of $\{P(d), P(L-d)\}$ (highlighted) to specific figures in Table \ref{tab:P_d_numeric_vals}. Success of retrieving $d$ from measurement falls as geometry gets larger -- maximal values for grid is roughly $0.9998$, and for the other geometries given in Table \ref{tab:P_d_numeric_vals}.}
\label{fig:P_l_distributions}
\end{figure}

\begin{figure}[h!]
\centering
\includegraphics[width=0.8\textwidth]{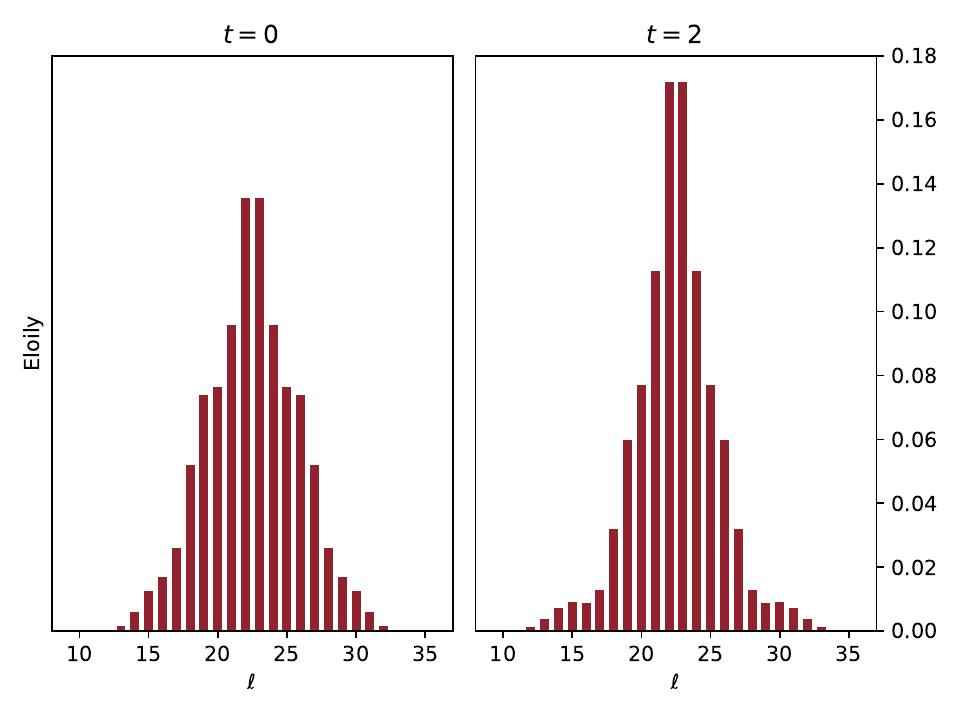}
\caption{Probabilities $P(\ell)$ for the eloily as in Fig. \ref{fig:P_l_distributions}. Change in $P(d=9)$ imperceptible, but value rises from $0.00002$ to $0.000155$.}
\label{fig:P_l_distributions_eloily}
\end{figure}

\subsection{Dynamic $\beta$ Values}\label{sec:quasi_grover_alts}
Despite its advantages, the quasi-Grover method (with $U_Q$) suffers from a low success rate in retrieving $d$, especially for larger geometries like the doily ($P(d) = 0.1994$) and eloily ($P(d)$ negligible). This can be addressed by dynamically adjusting the kickback angle. Returning to $U_{\mathcal{Q}}$ for modelling, we replace the fixed $\beta$ with a query-dependent multiple $b_t\beta$ ($b_t \in \mathbb{N}$, $0 \leq b_t < L$) at each query $t$. The value of $b_{t}$ is chosen to maximise the distance 

\begin{equation}\label{eq:distance_D_t}
D_t(\ell) := |\overline{\alpha_t} - e^{ib_t\ell\beta}\alpha_{\ell,t}|
\end{equation}

for $\ell=d$, where after $t$ queries $\overline{\alpha_{t}}$ is the mean amplitude and $\alpha_{\ell,t}$ the amplitude of a state with $\ell$ invalid lines. This boosts the efficacy of $\mathcal{D}$, both in terms of per-query effect on states with extremal $\ell$, and over longer timespans than the usual quasi-Grover method. 

How $b_{t}$ is chosen is by initially simulating the evolution of the state, calculating $D_t(\ell)$ over all values of $b_{t} \in [0, L)$, as each value determines a specific $\overline{\alpha_{t}}$. The optimal value is chosen for each $t$, which converges on $b_{t}=0$ once an upper limit on $P(d)$ is reached. A more analytic approach is not possible without knowledge of the distribution $|j_{\ell}|$.

Figure \ref{fig:P_evolution_beta_multipliers} illustrates the effect of taking $\text{max}_{b_{t}}D_{t}(\ell)$ at each query step $t$, showing dramatic increases in $P(d)$ for various geometries. Table \ref{tab:beta_multipliers_values} provides the optimal $b_t$ sequences, maximal $P(d)$ values, and corresponding optimal query times $t'_{\text{opt}}$. Considering both $P(d)$ and $P(L-d)$ as successes, the success rates are 0.9998 (grid), 0.9342 (two-spread), 0.9029 (doily), and 0.9894 (eloily), significantly higher than with fixed $\beta$.

\begin{figure}[h!]
    \centering
\includegraphics[width=0.45\textwidth]{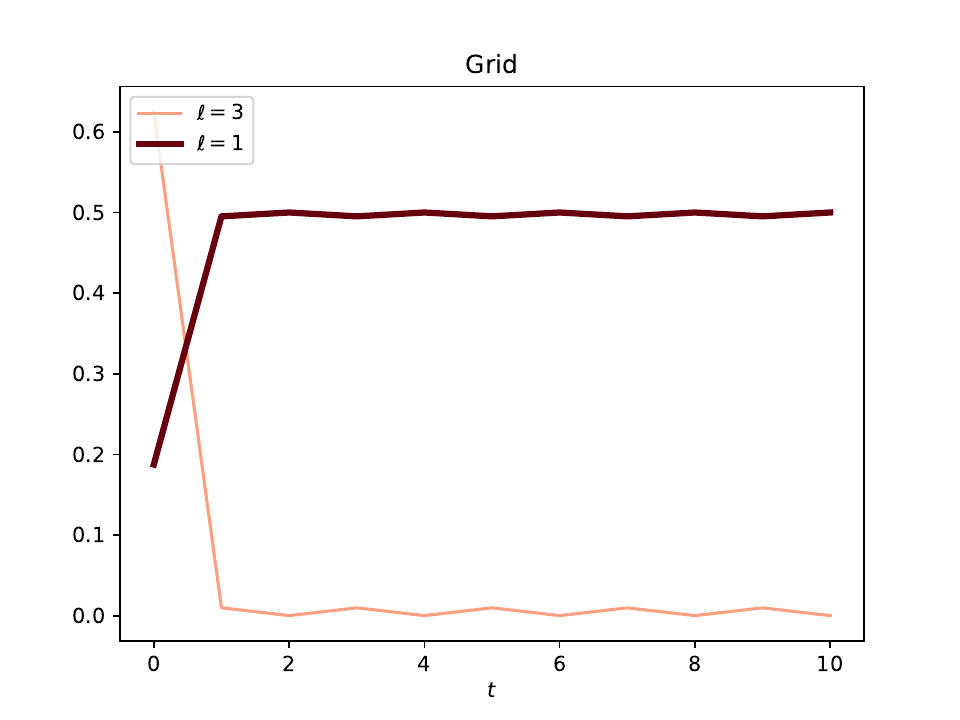}
\includegraphics[width=0.45\textwidth]{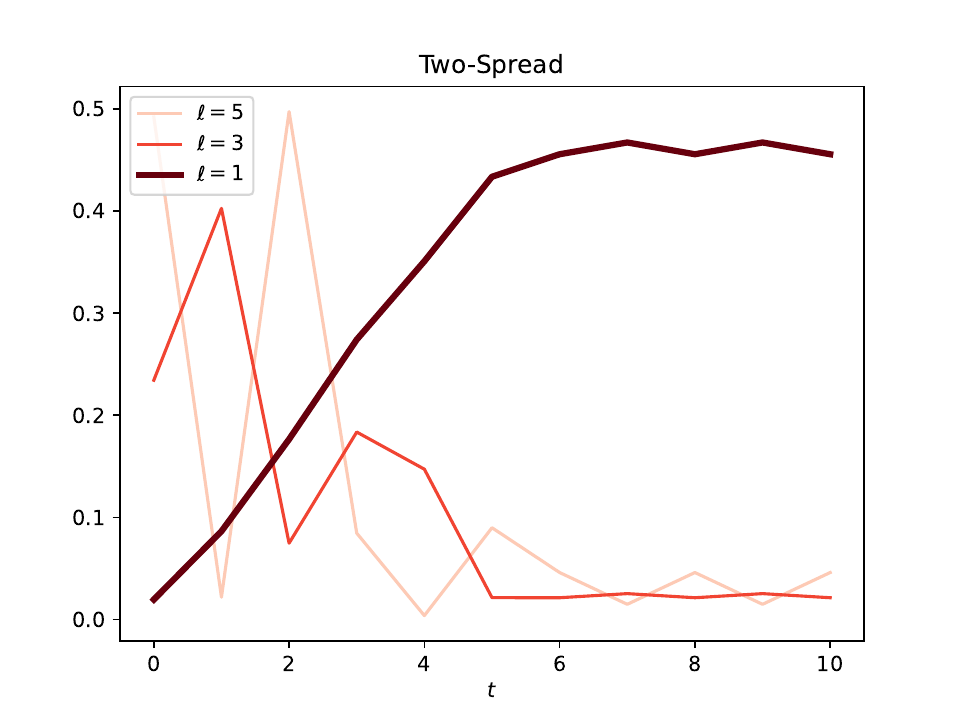}
\includegraphics[width=0.45\textwidth]{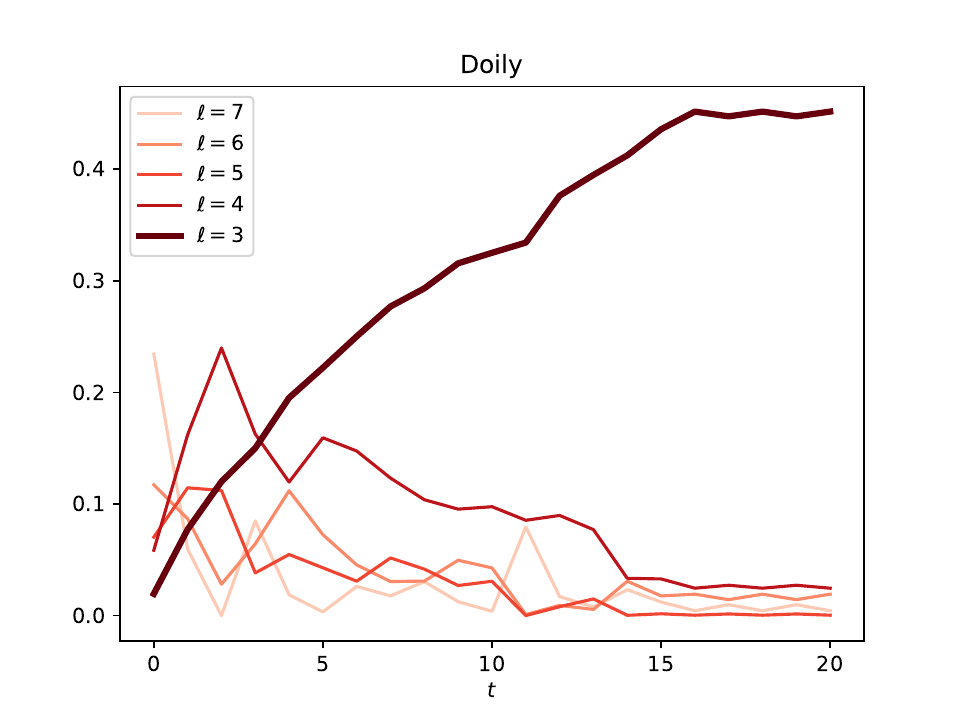}
\includegraphics[width=0.45\textwidth]{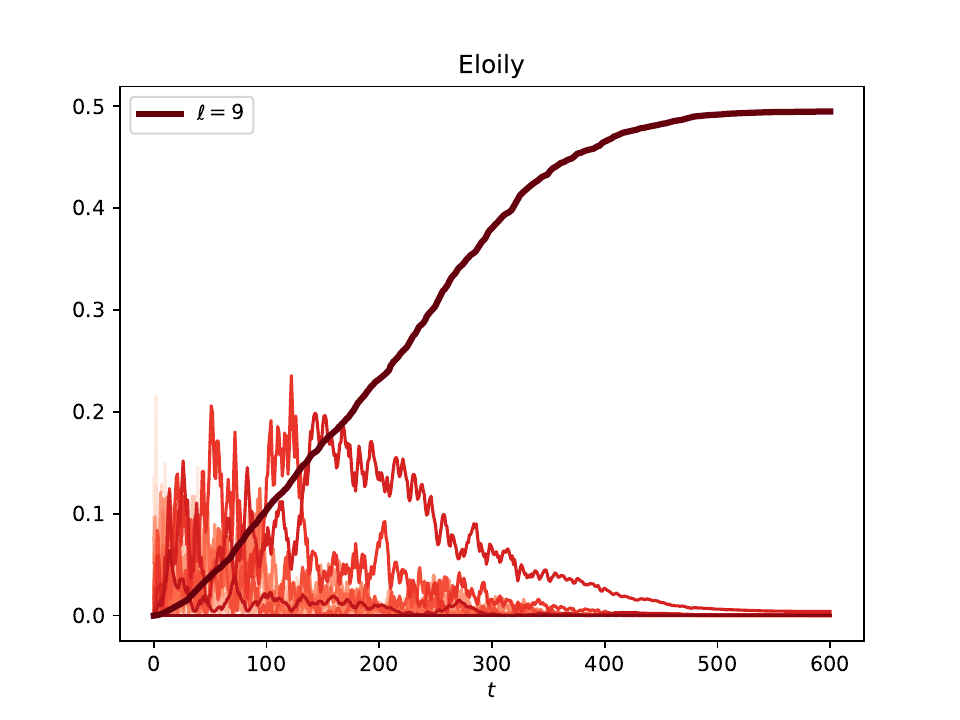}
    \caption{Evolution of $P(\ell)$ as in Fig. \ref{fig:P_ell_evolution} but with nontrivial $\beta$ multipliers $b_{t}$ at each query $t$ (note the higher query counts). Values of $b_{t}$, $\text{max}(P(d))$ and $t_{\text{opt}}'$ are given in Table \ref{tab:beta_multipliers_values}.}
    \label{fig:P_evolution_beta_multipliers}
\end{figure}

\begin{table}[h!]
\centering
\begin{tabular}{|c|l|cc|}
    \hline
     Geometry & $b_{t}$ & $t_{\text{opt}}'$ & $\text{max}(P(d))$ \\ \hline
     Grid & $(4, 1, 0)$ & $2$ & $0.49992$ \\
     Two-Spread & $(1, 1, 1, 4, 9, 1, 7, 0)$ & $7$ & $0.46710$ \\
     Doily & $(14, 13, 14, 13, 3, 5, 5, 14, 13, 11, 13, 13, 6, 6, 1, 12, 14, 0)$ & $16$ & $0.45146$ \\ 
     Eloily & & $587$ & $0.49469$ \\ \hline
\end{tabular}
\caption{Optimal $\beta$ multipliers $b_{t}$ for different geometries to give evolution in Fig. \ref{fig:P_evolution_beta_multipliers}. Shown are the maximal values for $P(d)$, the probability of measuring $d$ invalid lines, and the number of queries $t_{\text{opt}}'$ to find $\text{max}(P(d))$. All nonzero $b_{t}$ shown, any values for larger $t$ than displayed give $b_{t}=0$. Full eloily data not shown due to space constraints.}
\label{tab:beta_multipliers_values}
\end{table}

\subsection{Practical Issues With Dynamic $\beta$ Values}\label{sec:practical_issues}
Despite the theoretical improvements from dynamic $\beta$ multipliers, practical implementation on current NISQ computers remains challenging due to noise. While the quasi-Grover algorithm reduces circuit width from $V + L + 2\log_2 L + 1$ (standard Grover) to $V+2$, circuit depth remains high even for small geometries. Table \ref{tab:quasi_grover_depth} shows substantial depths for the grid, two-spread, and doily after transpilation on the \texttt{ibm\_kingston} backend with dynamic $\beta$ multipliers. As shown in Figure \ref{fig:quasi_grover_sim_v_kingston}, the additional noise from these large depths renders the results from real backends inaccurate, masking the theoretical signal.

\begin{table}[h!]
\centering
\begin{tabular}{|c|c|} \hline
Geometry & Depth \\ \hline
Grid & 3,960 \\
Two-Spread & 56,628 \\
Doily & 132,693 \\ \hline
\end{tabular}
\caption{Circuit depth on \texttt{ibm\_kingston} after transpilation for the quasi-Grover method, for beta multipliers with optimal queries. The large depth for geometries even as small as the grid render the noise levels too high on measurement.}
\label{tab:quasi_grover_depth}
\end{table}

\begin{figure}[h!]
\centering
\includegraphics[width=0.9\textwidth]{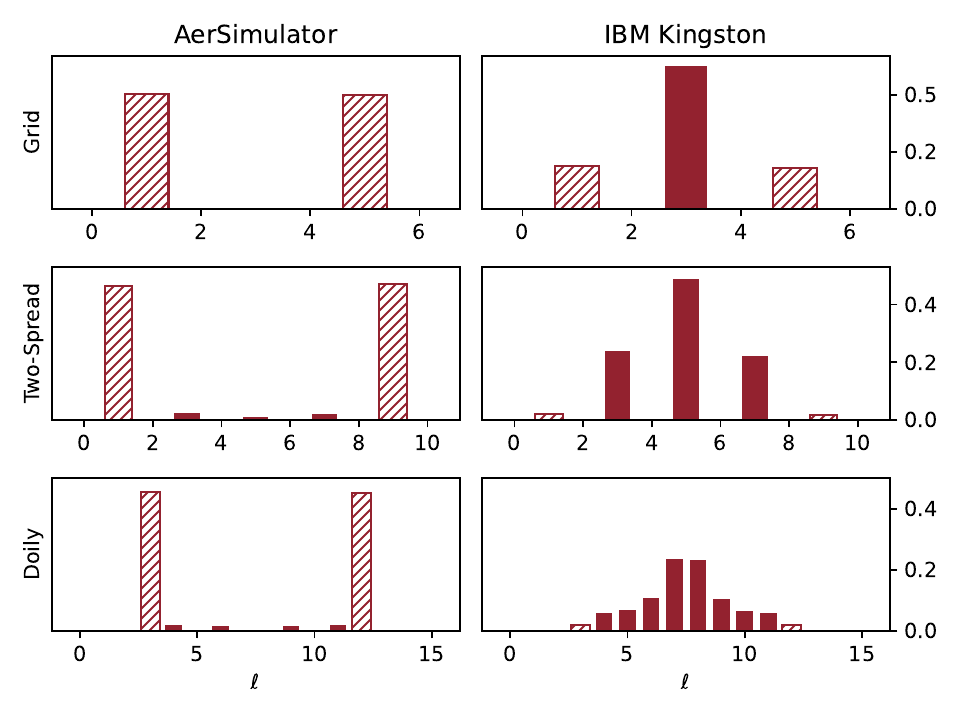}
\caption{Measurement probabilities $P(\ell)$ for the quasi-Grover algorithm on a circuit with optimal beta multipliers and queries (Table \ref{tab:beta_multipliers_values}) for the grid, two-spread and doily. Outputs on \texttt{ibm\_kingston} after 10,000 shots (right) are compared with clean simulator (left), showing that the additional noise in the real backends renders the results inaccurate, with $\ell\in\{d, L-d\}$ probabilities highlighted.}
\label{fig:quasi_grover_sim_v_kingston}
\end{figure}

A theoretical challenge is computing optimal $b_t$ values without prior knowledge of $\ell_j$ distributions or $d$. The former can be approximated by the binomial distribution (for large symmetric geometries considered here),

\begin{equation}
|j_{\ell}|_{\text{binom}} = {L \choose \ell}2^{V - L}
\end{equation}

The latter issue is the more concerning one -- how can we use $D_{t}(d)$ to compute $d$ when it is required in the definition? The answer is to iterate $D_{t}(\ell')$ over multiple values of $\ell'$, forming $b_{t}$ values each time and using those to measure on $V$ until an assignment is found furnishing invalid lines $\ell'$, which then acts as an upper bound on $d$. Each choice of $\ell'$ and subsequent measurement can act to bisect the existing solution space, needing $O(\log{L})$ attempts. Each $b_{t}$ determination per query requires $O(L)$ computations to determine the maximal $D_{t}(\ell')$. Overall, assuming that $t_{\text{opt}}' \approx n^{\frac{1}{3}}$ as it appears here in the computed examples, one has an algorithm for finding $d$ with complexity $O(n^{\frac{1}{3}}L\log{L}) = O(n^{\frac{1}{3}}(\log{n})^{2}\log{\log{n}})$, comparable the standard Grover complexity of $O(n^{\frac{1}{2}})$. Issues with this approach are 1) it is not proven that $t_{\text{opt}}' \approx n^{\frac{1}{3}}$ as seen in the example geometries here, and 2) computing $b_{t}$ on $|j_{\ell}|_{\text{binom}}$ (and then running these $b_{t}$ values on $|j_{\ell}|$ to compute $d$) has maximum success rate of $2\cdot P(d) \approx 0.788$ for the geometries seen here (see Table \ref{tab:results_binom_training}). This reduces the effectiveness of the algorithm.

\begin{table}[h!]
\centering
\begin{tabular}{|c|cc|cc|} \hline
\multirow{2}{*}{Geometry} & \multicolumn{2}{c|}{$|j_{\ell}|$} &  \multicolumn{2}{c|}{$|j_{\ell}|_{\text{binom}}$} \\ \cline{2-5}
& $t_{\text{opt}}'$ & $2\cdot\text{max}(P(d))$ & $t_{\text{opt}}'$ & $2\cdot\text{max}(P(d))$ \\ \hline \hline
Grid & 2 & 0.9998 & 4 & 0.9998 \\ 
Two-Spread & 7 & 0.9342 & 6 & 0.8704 \\ 
Doily & 16 & 0.9029 & 16 & 0.7888 \\ 
Eloily & 587 & 0.9894 & 385 & 0.7872 \\ \hline
\end{tabular}
\caption{Maximum probability $P(d)$ and optimal query time quasi-Grover with $\beta$ multipliers $b_{t}$, computed from real distributions (left) and from approximate (binomial) distributions (right). If the drop in accuracy is not dependent on geometry size then one can expect a constant increase in algorithmic iterations to compensate.}
\label{tab:results_binom_training}
\end{table}

Despite these challenges, the dynamic $\beta$ multiplier approach offers an advantage for problems where $d$ and the approximate invalid line distribution are known, and the goal is to find a maximally-satisfactory assignment. In this scenario, iterating over $\ell'$ is not necessary and the algorithm can provide a desired assignment in $O(n^{\frac{1}{3}}(\log{n})^{2})$ queries. In the more general case where $d$ is not known (i.e., a set of $L$ binary linear constraints over $V$ binary parameters with an odd number of parameters per constraint), this method provides an $O(n^{1/3}(\log{n})^{2}\log{\log{n}})$ solution, still significantly faster than the standard $O(n)$ brute-force method.
\newpage
\section{Conclusion}\label{sec:conclusion_chpt_6}
A quantum algorithm has been developed to find the maximal number of satisfied constraints for \texttt{Max Lin 2} problems, offering a quadratic speedup over classical methods. The standard Grover-based implementation requires $V+L+2X+1$ qubits (where $n=2^V$ inputs, $L$ constraints, $X=\lceil\log_2 L\rceil$) and has a time complexity of $O(\sqrt{n}\log{\log{n}})$, assuming $O(\log{L}) = O(\log{\log{n}})$ measurements. Simulations on a small grid geometry demonstrate that a single run with an initial threshold $y=2$ yields a 95\% probability of measuring the minimal $d=1$ constraint count in an ideal simulation. However, running on real quantum backends like \texttt{ibm\_marrakesh} shows no significant improvement over control algorithms due to noise from transpilation-induced depth increases.

An alternative quasi-Grover algorithm is proposed, which dramatically reduces circuit width and eliminates the need for threshold guessing and iterations. It encodes the invalid line information in the relative phases of the basis states, via a generalisation of the Grover marking oracle. This comes at the cost of reduced accuracy. Two improvements were tested: an alternative unitary operator $U_R$, and dynamically changing $\beta$ phase shifts. The former exploits the binomial-like distribution of invalid line counts, improving the effectiveness of the diffusion operator in the process. The latter maximises the amplitude distance between states of interest and the average at each query step, again improving the performance of the diffusion operator. This provides a search algorithm with $O(n^{1/3}(\log{n})^{2}\log{\log{n}})$ time complexity for finding a maximally-satisfactory assignment. This is particularly useful when approximate values for the degree $d$ and invalid line distribution $|j_\ell|$ are known, but an exact solution is not. 

For the grid, two-spread, doily, and eloily geometries, this method yields probabilities of 0.9998, 0.9342, 0.9029, and 0.9894, respectively, of measuring $d$ after just 2, 7, 16, and 587 queries. These results are comparable to the standard Grover complexity of $O(n^{1/2})$ but with a reduced circuit width of $V+2$. The generalisation of Grover's algorithm to mark states on a scale by encoding numeric information in relative phases is a novel contribution and warrants further investigation. In particular, alternatives to the standard diffusion operator should be considered in conjunction with the phase-encoding quasi-Grover operator $U_{\mathcal{Q}}$, in the search for some action that distinguishes basis states with extremal relative phase.

\section*{Acknowledgments}
This work is supported by the Graduate school EIPHI (contract ANR-17-EURE- 0002) through the project TACTICQ, the Ministry of Culture and Innovation. We acknowledge the use of the IBM Quantum Credits for this work. The views expressed are those of the authors and do not reflect the official policy or position of IBM or the IBM Quantum Experience team. The authors declare that they have no conflicts of interest to disclose, and would like to thank the developers of the open-source framework Qiskit. All code is available at \url{https://github.com/quantcert/quantcert.github.io}.

%\newpage
\printbibliography

%\appendix
%\pagestyle{appendix}
\appendix
\section{Results of Algorithm \ref{alg:whole_alg} on Triangle, Grid}\label{app:Grover_alg_results}
Below are measurement result distributions $P(\ell)$ for various $\ell$ and on the triangle (Fig. \ref{fig:triangle_geom}) and grid (Fig. \ref{fig:grid_example}) geometries after implementing Algorithm \ref{alg:whole_alg}. Implementation was done on a variety of clean and noisy simulators as well as on IBM quantum computing backends, as indicated. A variety of IBM native multi-control NOT gate methods were tested, with 2,048 shots in all cases.

\begin{figure}[h!]
\centering
\includegraphics[width=0.65\textwidth]{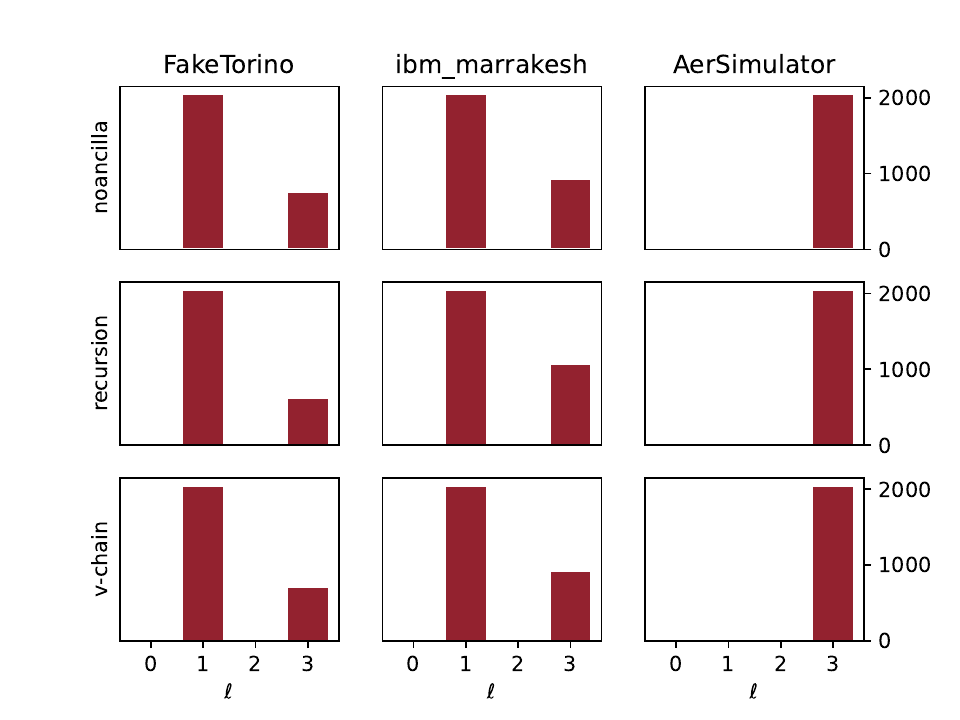}
    \caption{Invalid line distributions for the triangle on different backends with $y=2, t_{A} = 1$ and 2,048 shots. Left to right columns: \texttt{FakeTorino} noisy control simulation with $t_{G}=0$; \texttt{ibm\_marrakesh} results with $t_{G}=1$ (Table \ref{tab:real_qc_results}); \texttt{AerSimulator} clean simulation ($t_{G}=1$). Rows top to bottom: noancilla, recursion, v-chain modes. In noiseless simulation  $y^{(1)}=3$ invalid lines selected as predicted (see Sect. \ref{sec:triangle}). On real backend (middle column), algorithm slightly highlights $y^{(1)}=3$ vs. baseline (left column).}
    \label{fig:real_qc_results_triangle}
\end{figure} 

\begin{figure}[h!]
\centering
\includegraphics[width=0.65\textwidth]{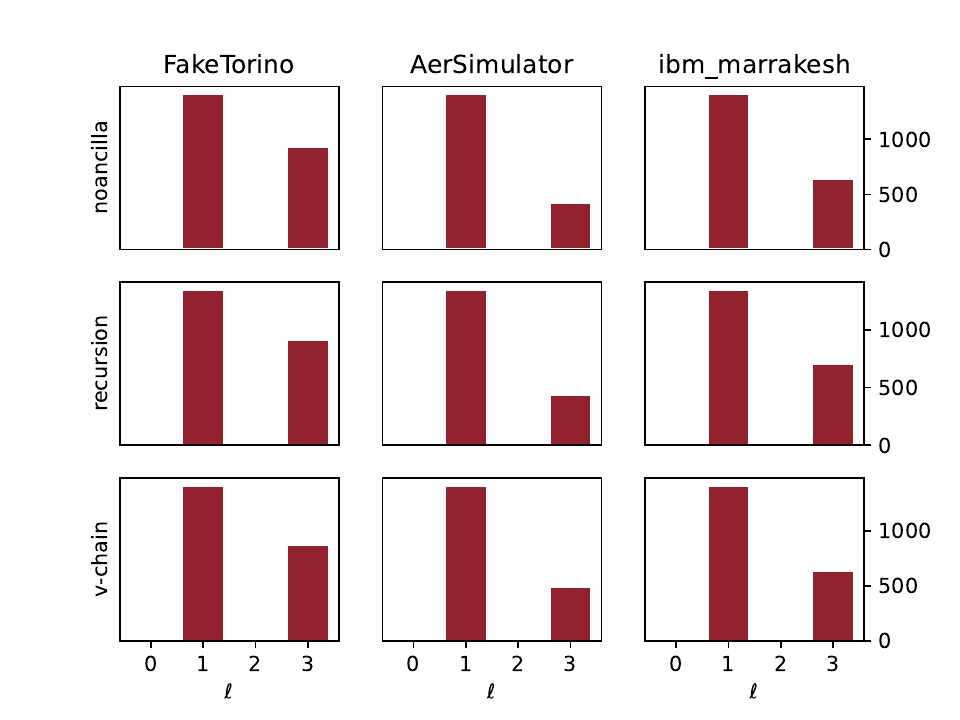}
    \caption{Invalid line distributions for the triangle on \texttt{FakeTorino}, \texttt{AerSimulator} using full noisy Grover simulation, with $y=2, t_{A} = 1$. Left to right columns: \texttt{FakeTorino} simulation with $t_{G}=1$; noisy \texttt{AerSimulator} simulation with $t_{G}=1$; \texttt{ibm\_marrakesh} results with $t_{G}=1$ (Table \ref{tab:real_qc_results}). Rows top to bottom: noancilla, recursion, v-chain modes. Results are not indicative of a preference towards $y^{(1)}=3$, due to transpilation-induced noise in all cases.}
    \label{fig:fake_brisbane_results_triangle}
\end{figure} 

\begin{figure}[h!]
\centering
\includegraphics[width=0.75\textwidth]{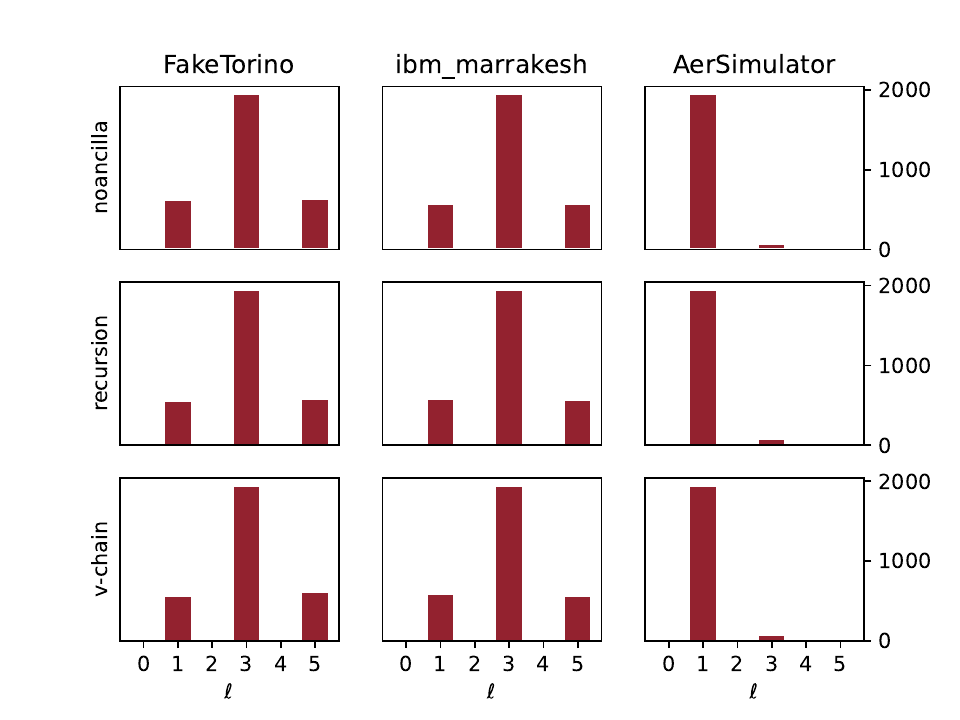}
    \caption{Invalid line distributions for the grid with $y=2, t_{A} = 1$. Left to right columns: \texttt{FakeTorino} control noisy simulation with $t_{G}=0$; \texttt{ibm\_marrakesh} results with $t_{G}=1$ (Table \ref{tab:real_qc_results}); noiseless \texttt{AerSimulator} simulation ($t_{G}=1$). Rows top to bottom: noancilla, recursion, v-chain modes. Simulated noisy circuit too large to run full Grover algorithm ($t_{G}=1$) on \texttt{FakeTorino} or \texttt{AerSimulator}. In noiseless simulation no ancilla qubits needed so results replicated on 3 rows, which shows clear preference toward $y^{(1)}=1$ invalid lines, in agreement with the fact that $d=1$ for the grid. As with triangle example, transpilation-induced noise renders NISQ output (middle column) indistinguishable from control (left).}
    \label{fig:real_qc_results_grid}
\end{figure}

\end{document}